\def\r{\mathbb{R}}
\def\be{\begin{equation}}
\def\ee{\end{equation}}
\def\ben{\begin{equation*}}
\def\een{\end{equation*}}
\newcommand{\dfb}{\stackrel{\Delta}{=}}
\def\u{\mathcal{U}}
\begin{document}
\title{Lyapunov Design for Event-Triggered Exponential Stabilization}
\titlenote{The work has been supported by NWO Domain TTW, The Netherlands, under the project TTW\#13712
``From Individual Automated Vehicles to Cooperative Traffic Management -- predicting the benefits of automated driving
through on-road human behavior assessment and traffic flow models'' (IAVTRM).
}

\author{Anton V. Proskurnikov}
\affiliation{%
  \institution{Delft University of Technology}
  \city{Delft}
  \state{Netherlands}
}
\email{anton.p.1982@ieee.org}

\author{Manuel Mazo Jr.}
\affiliation{%
  \institution{Delft University of Technology}
  \city{Delft}
  \state{Netherlands}
}
\email{m.mazo@tudelft.nl}

\begin{abstract}
Control Lyapunov Functions (CLF) method gives a constructive tool for stabilization of nonlinear systems.
To find a CLF, many methods have been proposed in the literature, e.g. backstepping for cascaded systems and sum of squares (SOS) programming for polynomial systems. Dealing with continuous-time systems, the CLF-based controller is also continuous-time, whereas practical implementation on a digital platform requires sampled-time control. In this paper, we show that if the continuous-time controller provides exponential stabilization, then an exponentially stabilizing \emph{event-triggered} control strategy exists with the convergence rate arbitrarily close to the rate of the continuous-time system.
\end{abstract}

\keywords{Event-triggered control, Stability, Control Lyapunov Function}

\maketitle

\section{Introduction}

The idea to use Lyapunov functions as a control \emph{design} tool~\cite{KalmanBertram:1960} naturally leads to
the method of Control Lyapunov Functions (CLF). Being a natural extension of the usual Lyapunov functions
for controlled systems, a CLF is a function that becomes a Lyapunov function of the closed-loop system under an appropriate choice of the controller. The existence of a CLF is necessary and sufficient for stabilization
of a general nonlinear system, as implied by the fundamental Artstein's theorem~\cite{Artstein:1983}. This theorem, however, provides no constructive way to design the stabilizing control, moreover, this control in general can be ``relaxed'' (randomized), mapping a system's state into a probability measure. These limitations may be overcome in the case of \emph{affine} systems. Sontag's theorem~\cite{Sontag:1989} gives an explicit formula for one stabilizing feedback, which appears to be continuous everywhere except for the equilibrium point.

Whereas to find a CLF for a given control system is a non-trivial problem, in some important situations it can be found in an explicit form. Examples include
some homogeneous systems~\cite{FauborgPomet:2000}, passive and feedback-passive systems~\cite{KokotovicArcak:2001,Khalil} and cascaded systems, for which both CLF and stabilizing controller can be
delivered by the \emph{backstepping} and \emph{forwarding} procedures~\cite{KrsticKokotovicBook,SepulchreKokotovicBook}.
CLFs can often be computed by using numerical tools, e.g. the Sum of Square (SOS) programming~\cite{Furqon:2017} and Zubov's method~\cite{Camilli:2008}.

Nowadays the method of CLF is recognized as a powerful tool in nonlinear control systems' design~\cite{KrsticKokotovicBook,Khalil,SepulchreKokotovicBook}.
A CLF gives a solution to the Hamilton-Jacobi-Bellman equation for an appropriate performance index, giving a solution to the \emph{inverse optimality} problem~\cite{FreemanKokotovic:1996}.
The method of CLF has been extended to discrete-time~\cite{KellettTeel:2004}, time-delay~\cite{Jancovic:2001}
and hybrid systems~\cite{Sanfelice:2013,Ames:2014}. Combining CLFs and Control Barrier Functions (CBFs), correct-by-design controllers for stabilization under safety constraints can be obtained~\cite{RomdlonyJayawardhana:2016,AmesTabuada:2017},
enabling to design \emph{safety-critical} control systems, arising e.g. in  automotive~\cite{AmesTabuada:2017,NilssonTabuada:2016} and aerospace~\cite{LuoChuLing:2005} applications.

Typically, CLF-based controllers are continuous-time. Their implementation on digital platforms requires to introduce time sampling. A straightforward approach, often used in engineering, is to emulate the continuous-time feedback by
a discrete-time control, sampled periodically at a high rate. However, rigorous techniques for nonlinear controllers' discretization have appeared only recently~\cite{NesicTeel:2004,ArcakNesic:2004} and are highly non-trivial. As an alternative to these techniques, digital controllers based on \emph{event-triggered} sampling can be used. Event-triggered sampling has a number of advantages over periodic (time-triggered) sampling, providing parsimonious use of communication and power~\cite{Astrom:2002,Tabuada:2007,BorgersHeemels:2014,Araujo:2014}.

A natural question arises whether the existence of a (continuous-time) CLF enables one to design an event-triggered stabilizing controller. Such controllers have been found for only a few special cases. The most studied is the case where the system admits a so called \emph{ISS Lyapunov} function~\cite{Tabuada:2007}, being a special CLF that ensures a special \emph{input-to-state stability} (ISS)~\cite{SontagWang95} property of the closed-loop system.
A more recent result from~\cite{SeuretPrieurMarchand:2013} relaxes the ISS condition to a stronger version of usual asymptotic stability, however the control algorithm from~\cite{SeuretPrieurMarchand:2013}, in general, does not ensure positive dwell time between the consecutive events, nor even the absence of Zeno behaviors.
Another approach, based on Sontag's universal formula~\cite{Sontag:1989} and inheriting its basic limitations has been proposed in~\cite{Marchand:2013,Marchand:2013IFAC}.
All of these results rely on restrictive assumptions, discussed in detail in Section~\ref{sec.prelim},
and do not allow to estimate the convergence rate efficiently. In many situations a CLF can be designed that provides \emph{exponential} convergence rate~\cite{Ames:2014} in continuous time. A natural question arises whether \emph{event-based} controllers can provide the same (or an arbitrarily close) convergence rate. In this paper, we give an affirmative answer to this fundamental question. Under natural assumptions, we design an event-triggered controller, similar in structure to the one proposed in~\cite{SeuretPrieurMarchand:2013}, but
retaining the exponential convergence and providing a positive \emph{dwell time} between consecutive events.


\section{Preliminaries and problem setup}\label{sec.prelim}

Given a map $G:\r^n\to\r^m$ such that $G(x)=(G_1(x),\ldots,G_m(x))^{\top}\in\r^m$,
we use $G'(x)=\big(\frac{\partial G_i(x)}{\partial x_j}\big)$ to denote its $m\times n$ Jacobian matrix.

\subsection{Control Lyapunov functions in stabilization problems}

In this paper, we deal only with CLFs for global asymptotic stabilization of general nonlinear systems of the form
\be\label{eq.syst}
\dot x(t)=F(x(t),u(t)),\quad t\ge 0.
\ee
Here $x(t)\in\r^{n}$ stands for the state vector and $u(t)\in U\subseteq\r^{m}$ is the control input.
Our goal is to find a controller $u(\cdot)=\mathcal U(x(\cdot))$, where $\mathcal U:x(\cdot)\mapsto u(\cdot)$ is some causal operator, such that for any $x(0)\in\r^{n}$ the closed-loop system has a forward complete (existing up to $+\infty$) solution, and all solutions converge to an equilibrium, assumed, without loss of generality, to be $0$
\be\label{eq.stab}
x(t)\xrightarrow[t\to\infty]{} 0\quad\forall x(0).
\ee

Following the definition from~\cite{Sontag:1989}, henceforth all CLFs are supposed to be radially unbounded, or \emph{proper}~\cite{Sontag:1989}.
\begin{definition}\cite{Sontag:1989}
A $C^1$-smooth function $V:\r^n\to\r$ is called a \emph{control Lyapunov function} (CLF) in the stabilization problem, if
\begin{gather}
V(0)=0,\quad V(x)>0\,\forall x\ne 0,\quad \lim_{|x|\to\infty}V(x)=\infty;\label{eq.pos-def}\\
\inf_{u\in U} V'(x)F(x,u)<0\quad\forall x\ne 0.\label{eq.inf-u}
\end{gather}
\end{definition}
\vskip 1mm
\par The condition~\eqref{eq.inf-u}, obviously, can be reformulated as follows
\be\label{eq.inf-u-2}
\forall x\ne 0\,\exists u=u(x)\in U\quad\text{such that $V'(x)F(x,u(x))<0$.}
\ee
If $F(x,u)$ is Lebesgue measurable (e.g., continuous), then the measurable selector theorem~\cite[Theorem~5.2]{Himmelberg:1975} implies that the function $u(x)$ can be chosen measurable. This function
is, in general, discontinuous, so that the closed-loop system has no
classical solutions. However, the existence of a CLF is necessary and sufficient~\cite{Artstein:1983} for the existence
of a \emph{relaxed} stabilizing control $x\mapsto v(x)$, where $v(x)$ is a probability distribution on $U$.

The situation becomes much simpler in the case where the system~\eqref{eq.syst} is \emph{affine}: $F(x,u)=f(x)+g(x)u$. Assuming that $f:\r^n\to\r^n$ and $g:\r^n\to\r^{n\times m}$ are continuous and $U$ is convex, the existence of a CLF ensures the possibility to design a controller $u=u(x)$, where $u:\r^n\to U$ is \emph{continuous} everywhere except for, possibly $x=0$ (the continuity at $x=0$ requires an additional assumption)~\cite{Artstein:1983}. Extending this result, Sontag~\cite{Sontag:1989} has proposed an explicit \emph{universal} formula, giving a broad class of stabilizing controllers. In the simplest case where $m=1$ (scalar control) and $U=\r$, Sontag's formula gives the following controller
\be\label{eq.sontag-scal}
u(x)=
\begin{cases}
-\frac{a(x)+\sqrt{a(x)^2+q(b(x))b(x)}}{b(x)},\,&b(x)\ge 0\\
0,&\text{otherwise.}
\end{cases}
\ee
Here the functions $a,b$ are defined as
\ben
a(x)\dfb V'(x)f(x),\quad b(x)\dfb V'(x)g(x)
\een
and $q(b)$ is a continuous function, $q(0)=0$. Controllers similar to~\eqref{eq.sontag-scal} have been proposed for $U$, being the Euclidean space $\r^m$ with $m>1$~\cite{Sontag:1989} and a closed ball in $\r^m$~\cite{Sontag:1991}.

\subsection{CLF and event-triggered control}

Dealing with continuous-time systems~\eqref{eq.syst}, Lyapunov controllers are also continuous-time,
which makes it impossible to implement them directly on a digital platform.
In reality, the control is always \emph{sampled time}, that is, the control command is computed and sent to the plant only at discrete instants $t_0=0<t_1<\ldots<t_n<\ldots$, remaining constant between them. The simplest time sampling is periodic $t_n=n\tau$. In spite of the belief that high-frequency periodic control (with small $\tau$) satisfactorily emulates the
continuous-time controller, mathematically rigorous analysis of the resulting nonlinear sampled-time system appears to be non-trivial~\cite{NesicTeel:2004,ArcakNesic:2004}. Alternatively, sampling can be triggered by some condition, or \emph{event}, e.g., $t_{n+1}$ can be the first instant after $t_n$ when the absolute value of the ``error'' $e(t)=x(t_n)-x(t)$ reaches a predefined threshold~\cite{Tabuada:2007}. This approach, known as event-based or event-triggered sampling has many advantages over periodic control, in particular,
it uses communication and energy resources parsimoniously~\cite{Astrom:2002,Tabuada:2007,BorgersHeemels:2014,Araujo:2014}.

A natural question arises whether a continuous-time CLF can be employed to design an \emph{event-triggered} stabilizing controller. Up to now, only a few results of this type have been reported in the literature. In the seminal work~\cite{Tabuada:2007}, an event-triggered controller requires the existence of a special CLF, called \emph{ISS Lyapunov function}, for which the conditions~\eqref{eq.pos-def},\eqref{eq.inf-u} are replaced by the following
\be\label{eq.iss}
\begin{gathered}
\alpha_1(|x|)\le V(x)\le \alpha_2(|x|)\quad\forall x\in\r^n\\
V'(x)F(x,k(x+e))\le -\alpha_3(|x|)+\gamma(|e|)\quad\forall x,e\in\r^n.
\end{gathered}
\ee
Here $\alpha_i(\cdot)$ ($i=1,2,3$) are $\mathcal K_{\infty}$-functions\footnote{That is, $\alpha_i$ are continuous, strictly increasing, $\alpha_i(0)=0$ and $\lim_{s\to\infty}\alpha_i(s)=\infty$.} and the mappings $u(\cdot):\r^n\to\r^m$, $F(\cdot,\cdot):\r^n\times\r^m\to\r^n$, $\alpha_3^{-1}(\cdot)$ and $\gamma(\cdot):\r_+\to\r_+$ are assumed to be locally Lipschitz. The continuous-time control $u=k(x)$ not only stabilizes the system, but in fact also provides \emph{input to state} stability (ISS) with respect to the measurement error $e$. The event-triggered controller, offered in~\cite{Tabuada:2007}, is as follows
\be\label{eq.tabuada}
\begin{gathered}
u(t)=k(x(t_n))\quad \text{if $t\in [t_n,t_{n+1})$}\\
t_0=0,\quad t_{n+1}=\inf\left\{t>t_n:\gamma(|e(t)|)=\sigma\alpha_3(|x(t)|)\right\},\\
e(t)=x(t_n)-x(t),\quad \sigma=const\in (0,1).
\end{gathered}
\ee
The controller~\eqref{eq.tabuada} provides positive \emph{dwell time} between consecutive events $\tau=\inf_{n\ge 0}(t_{n+1}-t_n)>0$, which is \emph{uniformly} positive for the solutions, starting in any compact set.

Whereas the condition~\eqref{eq.iss} holds for linear systems~\cite{Tabuada:2007} and some polynomial systems~\cite{AntaTabuada:2010}, in general it is restrictive and not easily verifiable.
Another approach to CLF-based design of event-triggered controllers have been proposed in~\cite{Marchand:2013,Marchand:2013IFAC}.
Discarding the ISS condition~\eqref{eq.iss}, this approach is based on Sontag's theory~\cite{Sontag:1989} and inherits its basic assumptions: first, the system has to be affine
$F(x,u)=f(x)+g(x)u$, where $f,g\in C^1$, second, the Sontag formula~\cite{Sontag:1989} gives an admissible controller, that is, $u(x)\in U$ for any $x$. The controllers from~\cite{Marchand:2013,Marchand:2013IFAC} also provide the positive dwell-time (or ``minimal inter-sampling interval'', MSI~\cite{Marchand:2013}) property.

An alternative event-triggered control algorithm, substantially relaxing the ISS condition~\eqref{eq.iss} and applicable, unlike~\cite{Marchand:2013,Marchand:2013IFAC}, to \emph{non-affine} systems, has been proposed in~\cite{SeuretPrieurMarchand:2013}. This approach requires the existence of a CLF that satisfies~\eqref{eq.iss} with $e=0$, i.e.
\be\label{eq.iss1}
\alpha_1(|x|)\le V(x)\le \alpha_2(|x|),\;\; V'(x)F(x,u(x))\le -\alpha_3(|x|).
\ee
The events are triggered in a way providing that $V$ strictly decreases along any non-equilibrium trajectory
\be\label{eq.marchand-trig}
t_{n+1}=\inf\{t\ge t_n: V'(x(t))F(x(t),u_n)=-\mu(|x(t)|)\}.
\ee
Here $0<\mu(r)<\alpha_3(r)$ for any $r>0$ and $\mu$ is $\mathcal K_{\infty}$-function.
In general, this algorithm does not guarantee the dwell time positivity and may even lead to Zeno solutions~\cite{SeuretPrieurMarchand:2013}.

In this paper, we consider an algorithm similar in spirit to the algorithm from~\cite{SeuretPrieurMarchand:2013}.
Unlike~\cite{SeuretPrieurMarchand:2013}, in this paper we confine ourselves to CLFs that give \emph{exponentially}
stabilizing continuous-time controllers, which requires to modify the condition~\eqref{eq.iss1}.
In the case where such a CLF exists, we prove (under some natural assumptions) that
exponential convergence can also be provided by an event-triggered controller. Furthermore, such a controller
provides convergence rate arbitrary close to the rate of the continuous-time control and provides positive dwell time between consecutive switchings of the control input.
Unlike~\cite{Tabuada:2007,AntaTabuada:2010}, we do not assume that CLF satisfies the ISS condition~\eqref{eq.iss}. Unlike~\cite{Marchand:2013,Marchand:2013IFAC}, the affinity of the system
is not needed, and the convergence rate can be explicitly estimated. Unlike~\cite{SeuretPrieurMarchand:2013}, we prove the dwell time positivity (which, in particular, implies that all
solutions are non-Zeno).

\subsection{Exponential stabilization. Problem setup.}

Whereas the existence of CLF typically allows to find a stabilizing controller, it can potentially be unsatisfactory due to very slow convergence. Throughout this paper, we assume that the continuous-time CLF-based controller provides \emph{exponential} convergence rate; such a CLF is also called exponentially stabilizing, or ES-CLF~\cite{Ames:2014}.
Although finding of ES-CLF can be non-trivial, the inverse Lyapunov theorem~\cite{Khalil} implies that it usually exists in the vicinity of the equilibrium if
the system can be exponentially stabilized.
\begin{definition}
A function $V(x)$, satisfying~\eqref{eq.pos-def},  is said to
be an ES-CLF with exponent $\gamma>0$, if there exists a map $\u:\r^n\to U$, satisfying the conditions
\be\label{eq.inf-u-exp}
V'(x)F(x,\u(x))\le -\gamma V(x)\;\forall x,\quad F(0,\u(0))=0.
\ee
\end{definition}
\vskip 1mm
Note that the map $\u(\cdot)$ is not assumed to be continuous, so that the controller $u=\u(x)$ can be ``infeasible'',
that is, for some initial conditions the closed-loop system has no classical (Carathe\'odory's) solutions. For forward complete solutions,~\eqref{eq.inf-u-exp} implies that
\[
V(x(t))\le V(x(0))e^{-\gamma t}.
\]
Note that, in general, $V(x)$ need not be a monotone function of the norm $|x|$, so~\eqref{eq.inf-u-exp} does not imply~\eqref{eq.iss1}.

In this paper, we address the following fundamental question: does the existence of an ES-CLF allow to design an \emph{event-triggered} mechanism, also providing exponential convergence? In fact, we seek for event-triggered controllers whose convergence rates are \emph{arbitrarily close} to the rate of the continuous-time controller.

\textbf{Problem.} Given an ES-CLF $V$ with exponent $\gamma$ and a constant $\sigma\in (0,1)$, design an event-triggered controller, providing the exponential convergence with exponent $\sigma\gamma$
\be\label{eq.conv-rate1}
0\le V(x(t))\le V(x(0))e^{-\sigma\gamma t}.
\ee

\section{Event-triggered Controller Design}\label{sec.event}

Henceforth we suppose that an ES-CLF $V(x)$ and the corresponding feedback map $\u(x)$ from~\eqref{eq.inf-u-exp} are fixed.
By definition, for any $x$ we have $\u(x)\in U$. To simplify notation, denote
\be\label{eq.w}
\begin{gathered}
W(x,u)\dfb V'(x)F(x,u)\in\r,\quad x\in\r^n, u\in U.
\end{gathered}
\ee
The design of our event-triggered algorithm, to be discussed in what follows, provides
that
\be\label{eq.inf-u-exp-sigma}
\dot{V}(x(t))=W(x(t),u(t))\le-\sigma\gamma V(x(t))\quad\forall t\ge 0,
\ee
which evidently implies~\eqref{eq.conv-rate1}.

As usual in event-triggered control, the input $u(t)$ switches at sampling instants $t_0,t_1,\ldots$, whose sequence depends on the solution. At the initial instant $t_0=0$, compute the control input $u_0\dfb\u(x(t_0))$ and consider the solution corresponding to the input $u(t)=u_0,\, t\ge t_0$. If $V(x(t_0))=0$, then the system is already at the equilibrium $x(t_0)=0$ and remains there due to the assumption $F(0,\u(0))=0$. Otherwise, for $t$ sufficiently close to $t_0$ one has
\be\label{eq.aux1}
W(x(t),u_0)<-\sigma\gamma V(x(t))
\ee
since $W(x(t_0),u_0)\le -\gamma V(x(t_0))$ and $\sigma<1$. The next sampling instant $t_1$ is the \emph{first} time when~\eqref{eq.aux1} is violated; let $t_1=\infty$ if such an instant does not exist. If $t_1<\infty$, we repeat the procedure and compute the new control input $u_1=\u(x(t_1))$, which remains unchanged till the next sampling instant $t_2$.
If $V(x(t_1))=0$, then the system stays at the equilibrium under the control input
$u(t)\equiv u_1,\, t\ge t_1$ and we put $t_2=\infty$. Otherwise, for $t$ sufficiently close to $t_1$ the following inequality holds
\be\label{eq.aux1+}
W(x(t),u_1)<-\sigma\gamma V(x(t)).
\ee
Let the next sampling instant $t_2$ be the first time $t>t_1$ when~\eqref{eq.aux1+} is violated and $t_2=\infty$ if such an instant does not exist. Iterating this procedure,
the sequence of instants $\{t_n\}$ is constructed in a way that the control $u(t)=u_n\dfb\u(x(t_n))$ for $t\in[t_n,t_{n+1})$ satisfies~\eqref{eq.inf-u-exp-sigma}.
If $V(x(t_n))>0$, $t_{n+1}$ is the first time $t>t_n$ when
\be\label{eq.inf-u-sigma-eq-n}
W(x(t),u_n)=-\sigma\gamma V(x(t)).
\ee
The sequence of sampling instants terminates if $V(x(t_n))=0$ or~\eqref{eq.inf-u-sigma-eq-n} does not hold at any $t>t_n$, in this case we formally define $t_{n+1}=\infty$ and the control remains constant $u(t)\equiv u_n$ for $t>t_n$.

The procedure just described can be written mathematically as
\be\label{eq.alg1}
\begin{gathered}
u(t)=\u(x(t_n))\;\;\forall t\in [t_n,t_{n+1}),\quad t_0=0,\\
t_{n+1}=\begin{cases}\inf\left\{t>t_n: \eqref{eq.inf-u-sigma-eq-n}\;\text{holds} \right\},\,&V(x(t_n))> 0,\\
\infty,\,&V(x(t_n))=0.
\end{cases}
\end{gathered}
\ee
(for brevity, we assume that $\inf\emptyset\dfb +\infty$).

Note that implementation of Algorithm~\eqref{eq.alg1} assumes implicitly that a constructive procedure (e.g. optimization-based) is available to compute
$\u(x(t_n))$ at each sampling instant, however, it \emph{does not} require any closed-form analytic expression for $\u(x)$.

To assure the practical applicability of the algorithm~\eqref{eq.alg1}, one has to prove that the solution of the closed-loop system is unique and forward complete, addressing the following two problems. First, one has to establish the solution's existence and uniqueness between two sampling instants. In particular, one has to show that the event~\eqref{eq.inf-u-sigma-eq-n} is detected earlier than the solution runs away to infinity.
Second, one has to show the absence of Zeno trajectories, for which the sequence $t_n$ converges to a finite limit.
\begin{definition}
A solution to the closed-loop system~\eqref{eq.syst},\eqref{eq.alg1} is said to be \emph{Zeno}, or exhibit \emph{Zeno behavior} if the sequence of sampling instants is infinite and $\lim\limits_{n\to\infty}t_n=\sup\limits_{n\ge 0}t_n<\infty$.
\end{definition}

Notice that even for non-Zeno trajectories it may happen that $t_{n+1}-t_n\to 0$ as $n\to\infty$, which makes it problematic to implement the algorithm on a real-time platform. Thus we are primarily interested in the more restrictive condition of the \emph{dwell-time} positivity.
\begin{definition}
The value $\mathfrak{T}=\mathfrak{T}(x_0)=\inf\limits_{n\ge 0}(t_{n+1}-t_n)$ is called the \emph{dwell-time} of the solution. Algorithm~\eqref{eq.alg1} provides
positive dwell time if $\mathfrak{T}(x_0)>0$. We say that the algorithm provides \emph{locally uniformly} positive dwell time if the function $\mathfrak{T}$ is
uniformly positive on any compact set $K$, e.g. $\inf\limits_{x_0\in \mathcal K}\mathfrak{T}(x_0)>0$.
\end{definition}

In this paper, we establish criteria for local uniform (called sometimes ``semi-uniform''~\cite{Marchand:2013}) positivity of the dwell time $\mathfrak{T}$.

\subsection{The inter-sampling behavior of solutions}

To examine the solution's behavior between two sampling instants, consider the following Cauchy problem
\be\label{eq.cauchy}
\dot \xi(t)=F(\xi(t),u_*),\; \xi(0)=\xi_0,\;t\ge 0,
\ee
where $u_*\in U$. To provide the unique solvability of~\eqref{eq.cauchy}, henceforth the following non-restrictive assumption is adopted.
\begin{assumption}\label{ass.contin}
For $u_*\in U$, the map $F(\cdot,u_*)$ is locally Lipschitz, and hence the function $W(\cdot,u_*):\r^n\to\r$ is continuous.
\end{assumption}

\begin{proposition}\label{prop.tech}
Under Assumption~\ref{ass.contin}, the Cauchy problem~\eqref{eq.cauchy} has the unique solution $\xi(t)=\xi(t|\xi_0,u_*)$,
whose maximal interval of existence either contains a point $t$ such that $W(\xi(t),u(t))>-\sigma\gamma V(\xi(t))$ or is infinite (the solution is forward complete).
\end{proposition}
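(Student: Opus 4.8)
The plan is to separate the claim into two independent parts: local existence and uniqueness of the solution, and the dichotomy between the inequality $W(\xi(t),u_*)>-\sigma\gamma V(\xi(t))$ being attained and the solution being forward complete. For the first part I would simply invoke the classical Picard--Lindel\"of (Carath\'eodory) theory: Assumption~\ref{ass.contin} guarantees that $F(\cdot,u_*)$ is locally Lipschitz, so~\eqref{eq.cauchy} admits a unique solution on a maximal interval of existence $[0,T_{\max})$. I would then recall the standard escape characterization of maximal solutions: if $T_{\max}<\infty$, then $\xi(t)$ eventually leaves every compact subset of $\r^n$, i.e. $|\xi(t)|\to\infty$ as $t\to T_{\max}$.

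The heart of the argument is the dichotomy, which I would establish by contraposition. Suppose the strict inequality is \emph{never} attained on $[0,T_{\max})$, i.e. $W(\xi(t),u_*)\le -\sigma\gamma V(\xi(t))$ for every $t\in[0,T_{\max})$. Since $V$ is $C^1$ and $\xi$ solves~\eqref{eq.cauchy}, the chain rule gives $\frac{d}{dt}V(\xi(t))=V'(\xi(t))F(\xi(t),u_*)=W(\xi(t),u_*)$, so this assumption reads $\dot V(\xi(t))\le -\sigma\gamma V(\xi(t))$. Applying Gr\"onwall's inequality (equivalently, the standard comparison lemma) then yields $V(\xi(t))\le V(\xi_0)e^{-\sigma\gamma t}\le V(\xi_0)$ for all $t\in[0,T_{\max})$.

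From here I would exploit the properness of $V$. Because $V$ is continuous and radially unbounded by~\eqref{eq.pos-def}, the sublevel set $\Omega\dfb\{x\in\r^n:V(x)\le V(\xi_0)\}$ is closed and bounded, hence compact. The bound just derived shows $\xi(t)\in\Omega$ for all $t\in[0,T_{\max})$, so the trajectory is confined to a fixed compact set and cannot escape to infinity. By the escape characterization this forces $T_{\max}=\infty$, i.e. the solution is forward complete. Contrapositively, whenever the solution is \emph{not} forward complete ($T_{\max}<\infty$), the strict inequality $W(\xi(t),u_*)>-\sigma\gamma V(\xi(t))$ must hold at some $t\in[0,T_{\max})$, which is precisely the claimed alternative.

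I expect no genuinely hard step: the proposition is in essence a Lyapunov-type a priori estimate (a trajectory satisfying the decrease condition cannot blow up inside a compact sublevel set) repackaged in terms of the event-triggering rule. The only points requiring a little care are the bookkeeping around the escape characterization of $T_{\max}$ and the justification for differentiating $t\mapsto V(\xi(t))$; both are routine, the latter being immediate from the $C^1$-smoothness of $V$ together with the local Lipschitz regularity of $F(\cdot,u_*)$ that makes $\xi$ continuously differentiable.
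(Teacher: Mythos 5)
Your proof is correct and follows essentially the same route as the paper's: Picard--Lindel\"of for local existence and uniqueness, then the observation that as long as $W(\xi(t),u_*)\le-\sigma\gamma V(\xi(t))$ the value $V(\xi(t))$ is non-increasing, so the trajectory stays in the compact sublevel set $\{V\le V(\xi_0)\}$ (using properness of $V$) and cannot escape in finite time. The paper's argument is just a more terse version of yours; your extra detail (Gr\"onwall, the explicit escape characterization of the maximal interval) is sound but not a different method.
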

\begin{proof}
The first statement follows from the Picard-Lindel\"of theorem~\cite{Khalil}. It remains to prove that the solution cannot grow infinite while $W(\xi(t),u(t))\le -\sigma\gamma V(\xi(t))$. Indeed, the latter condition implies that $\dot{V}(\xi(t))\le -\sigma\gamma V(\xi(t))\le 0$, and thus $V(\xi(t))\le V(\xi_0)$. Recalling that $V$ is proper, one obtains boundedness of $\xi(t)$.
\end{proof}
\begin{corollary}\label{cor.unique1}
Under Assumption~\ref{ass.contin}, $x(t)=\xi(t-t_+|x_+,u_*)$ is the unique solution to the Cauchy problem
\be\label{eq.cauchy+}
\dot x(t)=F(x(t),u_*),\; x(t_+)=x_+,\;t\ge t_+,
\ee
where $x_+\in\r^n$ and $u_*\in U$. If $x_+=0$ and $u_*=\u(0)$, then $x(t)\equiv 0$.
\end{corollary}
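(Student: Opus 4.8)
The plan is to reduce both claims to Proposition~\ref{prop.tech} by exploiting that, for a fixed input value $u_*$, the vector field $F(\cdot,u_*)$ carries no explicit time dependence, so the flow of $\dot x=F(x,u_*)$ is invariant under time-translation; this autonomy is the only structural fact the argument needs. First I would set $y(s)\dfb\xi(s|x_+,u_*)$, the unique solution of~\eqref{eq.cauchy} with initial state $x_+$ furnished by Proposition~\ref{prop.tech}, and define $x(t)\dfb y(t-t_+)=\xi(t-t_+|x_+,u_*)$. Differentiating gives $\dot x(t)=\dot y(t-t_+)=F(y(t-t_+),u_*)=F(x(t),u_*)$, while $x(t_+)=y(0)=x_+$; hence $x(t)$ solves~\eqref{eq.cauchy+}, its maximal interval of existence being that of $y$ shifted by $t_+$.

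For uniqueness I would reverse the shift: given any solution $\tilde x$ of~\eqref{eq.cauchy+}, the function $\tilde y(s)\dfb\tilde x(s+t_+)$ solves~\eqref{eq.cauchy} with the same initial data, so $\tilde y\equiv y$ by Proposition~\ref{prop.tech} and therefore $\tilde x\equiv x$. (Equivalently, one could invoke Picard--Lindel\"of directly, since Assumption~\ref{ass.contin} makes $F(\cdot,u_*)$ locally Lipschitz.) The second assertion then follows with no further work: the defining inequality~\eqref{eq.inf-u-exp} of the ES-CLF carries the side condition $F(0,\u(0))=0$, so the constant trajectory $x(t)\equiv 0$ satisfies $\dot x(t)=0=F(0,\u(0))=F(x(t),u_*)$ together with $x(t_+)=0=x_+$. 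It is thus \emph{a} solution of~\eqref{eq.cauchy+} in the case $x_+=0$, $u_*=\u(0)$, and by the uniqueness just established it is \emph{the} solution, i.e. $\xi(t-t_+|0,\u(0))\equiv 0$.

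I do not anticipate a genuine obstacle here, as the statement is essentially time-shift bookkeeping layered on Proposition~\ref{prop.tech}. The one point deserving care is confirming that translation in $t$ preserves both existence and uniqueness --- precisely the autonomy of $F(\cdot,u_*)$ --- which is what lets me transfer the results already obtained for the base problem~\eqref{eq.cauchy} to the shifted problem~\eqref{eq.cauchy+}.
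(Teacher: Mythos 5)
Your proof is correct and follows exactly the route the paper intends: the corollary is stated without proof precisely because it is the time-translation of Proposition~\ref{prop.tech} (using autonomy of $F(\cdot,u_*)$), with the zero-solution claim following from $F(0,\u(0))=0$ and uniqueness. Nothing is missing.
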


Applying Corollary~\ref{cor.unique1} to $t_+=t_n$, $x_+=x(t_n)$ and $u_*=\u(x(t_n))$, one shows that the sequence of sampling instants $t_n$ in~\eqref{eq.alg1} is well defined, and
the instant $t_{n+1}$ depends only on $t_n$ and $x(t_n)$.
\begin{corollary}\label{cor.t-n-exist}
Let Assumption~\ref{ass.contin} hold. For each sampling instant $t_n$, the solution to the Cauchy problem
\be\label{eq.cauchy-n}
\dot x(t)=F(x(t),u_n),\quad u_n=\u(x(t_n)),\,t\ge t_n
\ee
either satisfies the triggering condition~\eqref{eq.inf-u-sigma-eq-n} at some time $t>t_n$
(that is, $t_{n+1}<\infty$) or is forward complete and satisfies the inequality $W(x(t),u_n)<-\sigma\gamma V(x(t))\,\forall t\ge t_n$.
\end{corollary}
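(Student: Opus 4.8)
The plan is to reduce the statement to Proposition~\ref{prop.tech} and Corollary~\ref{cor.unique1} by tracking the sign of a single scalar function. First I would apply Corollary~\ref{cor.unique1} with $t_+=t_n$, $x_+=x(t_n)$ and $u_*=u_n=\u(x(t_n))$ to obtain the unique solution $x(t)=\xi(t-t_n\mid x(t_n),u_n)$ of~\eqref{eq.cauchy-n} on its maximal forward interval $[t_n,T)$, $T\le+\infty$. The degenerate case $V(x(t_n))=0$ is immediate: then $x(t_n)=0$ by positive definiteness, Corollary~\ref{cor.unique1} gives $x(t)\equiv 0$, the solution is forward complete, and the algorithm sets $t_{n+1}=\infty$; so I would assume $V(x(t_n))>0$ from here on.

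Next I would introduce the continuous scalar function
\[
\phi(t)\dfb W(x(t),u_n)+\sigma\gamma V(x(t)),
\]
whose continuity on $[t_n,T)$ follows from continuity of $x(\cdot)$, of $W(\cdot,u_n)$ (Assumption~\ref{ass.contin}), and of the $C^1$ function $V$. The point is that the triggering condition~\eqref{eq.inf-u-sigma-eq-n} is precisely $\phi(t)=0$, while the desired strict inequality is $\phi(t)<0$. Evaluating at the sampling instant and using the ES-CLF inequality~\eqref{eq.inf-u-exp} for $u_n=\u(x(t_n))$ together with $\sigma<1$, I obtain
\[
\phi(t_n)\le-\gamma V(x(t_n))+\sigma\gamma V(x(t_n))=-(1-\sigma)\gamma V(x(t_n))<0.
\]

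The argument then splits according to whether $\phi$ ever vanishes on $(t_n,T)$. If the zero set $\{t\in(t_n,T):\phi(t)=0\}$ is nonempty, then $t_{n+1}$ equals its infimum; by continuity this infimum cannot equal $t_n$ (otherwise $\phi(t_n)=0$, contradicting the bound above) and is itself a zero of $\phi$, so $t_{n+1}\in(t_n,T)\subset(t_n,\infty)$ and the triggering condition~\eqref{eq.inf-u-sigma-eq-n} holds there. If instead $\phi$ never vanishes on $(t_n,T)$, then, since $\phi(t_n)<0$ and $\phi$ is continuous, the intermediate value theorem forces $\phi(t)<0$ for every $t\in[t_n,T)$; in particular no point of the maximal interval satisfies $W(x(t),u_n)>-\sigma\gamma V(x(t))$, so Proposition~\ref{prop.tech} yields $T=+\infty$ (forward completeness) and the strict inequality $W(x(t),u_n)<-\sigma\gamma V(x(t))$ for all $t\ge t_n$, as claimed.

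I expect the only delicate point to be the first branch: one must verify that the infimum defining $t_{n+1}$ is actually attained as a zero of $\phi$ and is strictly larger than $t_n$, both of which rest on the continuity of $\phi$ and the strict sign $\phi(t_n)<0$. The second branch is a direct invocation of Proposition~\ref{prop.tech}, whose hypothesis (no point with $W>-\sigma\gamma V$) is exactly what the intermediate value theorem supplies here.
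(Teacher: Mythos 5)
Your proposal is correct and follows essentially the same route as the paper, which states this corollary without a separate proof as an immediate consequence of Corollary~\ref{cor.unique1} and Proposition~\ref{prop.tech}; your function $\phi(t)=W(x(t),u_n)+\sigma\gamma V(x(t))$ together with the sign $\phi(t_n)\le-(1-\sigma)\gamma V(x(t_n))<0$ and the intermediate value theorem just makes explicit the routine continuity argument the authors leave implicit. The only cosmetic caveat is the degenerate case $x(t_n)=0$, where the dichotomy holds with equality $W\equiv 0=-\sigma\gamma V$ rather than the strict inequality, but the algorithm sets $t_{n+1}=\infty$ there by definition, exactly as you note.
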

\begin{remark}\label{eq.non-increase}
By construction of the sampling instants, the inequality~\eqref{eq.inf-u-exp-sigma} holds between them and, in particular,
the CLF $V(x(t))$ is non-increasing along each trajectory.
\end{remark}

\subsection{Dwell time positivity}

In this subsection, we formulate our main result, namely, the criterion of dwell time positivity in Algorithm~\eqref{eq.alg1}. This criterion relies on additional assumptions.

For any $x_*\in\r^n$ and $\mathcal K\subset\r^n$, denote
\be\label{eq.b}
B(x_*)\dfb\{x: V(x)\le V(x_*)\},\;\; B(\mathcal K)\dfb \bigcup\limits_{x_*\in \mathcal K}B(x_*).
\ee
The set $B(\mathcal K)$ is bounded for any bounded set $\mathcal K$ since $V(x)$ is supposed to be continuous and radially unbounded
\[
B(\mathcal K)\subseteq\{x: V(x)\le \sup_{x_*\in \mathcal K} V(x_*)\}.
\]
Accordingly to Assumption~\ref{ass.contin}, the following supremum is finite
\be\label{eq.kappa}
\varrho(x_*)\dfb\sup\limits_{\substack{x_1,x_2\in B(x_*)\\x_1\ne x_2}} \frac{|F(x_1,\u(x_*))-F(x_2,\u(x_*))|}{|x_2-x_1|}<\infty\\
\ee
for any $x_*$ (in the case where $x_*=0$ and $B(x_*)=\{0\}$, let $\varrho(x_*)\dfb 0$). We adopt a stronger version of Assumption~\ref{ass.contin}.
\begin{assumption}\label{ass.F}
The Lipschitz constant $\varrho(x_*)$ in~\eqref{eq.kappa} is a \emph{locally bounded} function of $x_*$ (that is, $\rho$ is bounded on any compact).
\end{assumption}

Assumption~\ref{ass.F} holds, for instance, if the mapping $\u$ is locally bounded and the derivative $F'_x(x,u)$ is continuous in $x$ and $u$.
The next assumption is a stronger version of CLF's smoothness.
\begin{assumption}\label{ass.gradient}
The function $V'(x)$ is locally Lipschitz.
\end{assumption}

Along with $\varrho(\cdot)$, we introduce the Lipschitz constant of the gradient $V'$ on the compact set $B(x_*)$ as follows
\be\label{eq.nu}
\nu(x_*)\dfb\sup\limits_{\substack{x_1,x_2\in B(x_*)\\x_1\ne x_2}} \frac{|V'(x_1)-V'(x_2)|}{|x_2-x_1|},\,\nu(0)=0.
\ee
Since for any compact $\mathcal{K}$ the set $B(\mathcal K)$ is bounded, one has
\[
\sup_{x_*\in \mathcal K}\nu(x_*)\le \sup\limits_{\substack{x_1,x_2\in B(\mathcal K)\\x_1\ne x_2}} \frac{|V'(x_1)-V'(x_2)|}{|x_2-x_1|}<\infty.
\]
Assumption~\ref{ass.gradient} thus implies that $\nu(\cdot)$ from~\eqref{eq.nu} is \emph{locally bounded}.

Finally, we adopt an assumption that allows to establish the relation between the convergence rates of the ES-CLF $V(x(t))$ under the continuous-time control $\u=\u(x)$ and the solution $x(t)$. Notice that~\eqref{eq.inf-u-exp} gives no information about the speed of the solution's convergence since $\dot V(x)=V'(x)\dot x(t)$ depends only on the velocity's $\dot x(t)$ projection on the gradient vector $V'(x)$, whereas the transversal components can be arbitrary. These transversal dynamics can potentially lead to very slow and ``non-smooth'' convergence, e.g., the velocity $\dot x=F(x,\u(x))$ can be unbounded as $x\to 0$. Our final assumption excludes this pathological behavior. For brevity, let
\[
\bar F(x)\dfb F(x,\u(x)).
\]
\begin{assumption}\label{ass.non-degen}
The ES-CLF $V(x)$ and the corresponding controller $\u(x)$ satisfy the following properties:
\be\label{eq.non-degen}
\begin{gathered}
|\bar F(x)|\le M_1(x)|V'(x)|\quad\forall x\in\r^n,\\
\cos\theta(x)\le -M_2(x)\quad\forall x\in\r^n\setminus\{0\}.
\end{gathered}
\ee
Here $\theta(x)$ stands for the angle between the vectors $\bar F(x)$ and $V'(x)$ (Fig.~\ref{fig.1}), $M_1$ is locally bounded, and $M_2$ is locally strictly positive\footnote{In other words, on any compact set the function $M_1$ is bounded and the function $M_2$ is uniformly strictly positive.}.
\end{assumption}

The inequalities~\eqref{eq.non-degen} imply that the solution does not oscillate near the equilibrium since $|\bar F(x)|\to 0$ as $|x|\to 0$, and the angle between the vectors\footnote{The inequality~\eqref{eq.inf-u-exp} implies that both vectors are non-zero unless $x\ne 0$} $\dot x=\bar F(x)$ and $V'(x)$ remains strictly obtuse as $x\to 0$, i.e. the flow is not transversal to the CLF's gradient.
\begin{figure}[h]
\center
\includegraphics[height=3cm]{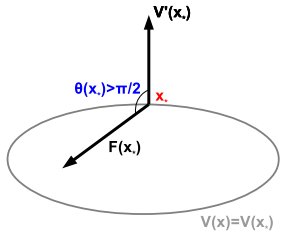}
\caption{Illustration to Assumption~\ref{ass.non-degen}: the angle $\theta(x_*)$}\label{fig.1}
\end{figure}

Assumption~\ref{ass.non-degen} can be reformulated as follows.
\begin{lemma}\label{lem.tech}
For an ES-CLF $V$, Assumption~\ref{ass.non-degen} holds if and only if
a locally bounded function $M(x)>0$ exists such that
\be\label{eq.non-degen1}
|V'(x)|\,|\bar F(x)|+|\bar F(x)|^2\leq M(x)|V'(x)\bar F(x)|\quad\forall x.
\ee
\end{lemma}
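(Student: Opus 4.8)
The plan is to prove both implications by translating the scalar inequality~\eqref{eq.non-degen1} into statements about the magnitudes $|V'(x)|$, $|\bar F(x)|$ and the angle $\theta(x)$. The starting observation, valid for every $x\neq 0$, is that the ES-CLF inequality~\eqref{eq.inf-u-exp} gives $V'(x)\bar F(x)\le -\gamma V(x)<0$; hence both $V'(x)$ and $\bar F(x)$ are nonzero and $V'(x)\bar F(x)=\cos\theta(x)\,|V'(x)|\,|\bar F(x)|<0$, so that $|V'(x)\bar F(x)|=|\cos\theta(x)|\,|V'(x)|\,|\bar F(x)|$ with $\cos\theta(x)<0$. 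The degenerate point $x=0$ is treated separately: since $V$ attains its global minimum there by~\eqref{eq.pos-def}, we have $V'(0)=0$, while $\bar F(0)=F(0,\u(0))=0$ by~\eqref{eq.inf-u-exp}; thus both sides of~\eqref{eq.non-degen1} vanish and the first line of~\eqref{eq.non-degen} reads $0\le 0$, whereas the angle condition in~\eqref{eq.non-degen} is imposed only for $x\neq0$.

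For the forward direction I would assume Assumption~\ref{ass.non-degen} and fix $x\neq0$. Factoring the left-hand side of~\eqref{eq.non-degen1} as $|\bar F(x)|\big(|V'(x)|+|\bar F(x)|\big)$ and applying $|\bar F(x)|\le M_1(x)|V'(x)|$ gives $|V'(x)|\,|\bar F(x)|+|\bar F(x)|^2\le (1+M_1(x))\,|V'(x)|\,|\bar F(x)|$. On the other side, $\cos\theta(x)\le -M_2(x)$ yields $|V'(x)\bar F(x)|\ge M_2(x)\,|V'(x)|\,|\bar F(x)|$. Comparing the two and cancelling the positive factor $|V'(x)|\,|\bar F(x)|$, the choice $M(x)=(1+M_1(x))/M_2(x)$ makes~\eqref{eq.non-degen1} hold; this $M$ is positive and locally bounded because $M_1$ is locally bounded and $M_2$ is bounded away from zero on each compact. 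At $x=0$ I simply assign any positive value consistent with local boundedness.

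For the converse I would assume~\eqref{eq.non-degen1} for some positive, locally bounded $M$, fix $x\neq0$, and divide by $|\bar F(x)|>0$, using the cosine representation to get $|V'(x)|+|\bar F(x)|\le M(x)\,|\cos\theta(x)|\,|V'(x)|$. Discarding the nonnegative term $|\bar F(x)|$ and cancelling $|V'(x)|$ gives $1\le M(x)|\cos\theta(x)|$, hence $|\cos\theta(x)|\ge 1/M(x)$; combined with $\cos\theta(x)<0$ this is exactly $\cos\theta(x)\le -M_2(x)$ with $M_2(x)=1/M(x)$, which is locally strictly positive since $M$ is locally bounded. For the growth bound I would instead insert the Cauchy--Schwarz estimate $|V'(x)\bar F(x)|\le |V'(x)|\,|\bar F(x)|$ into~\eqref{eq.non-degen1} and divide by $|\bar F(x)|$ to obtain $|V'(x)|+|\bar F(x)|\le M(x)|V'(x)|$; this also forces $M(x)\ge 1$ and yields $|\bar F(x)|\le (M(x)-1)|V'(x)|$, so $M_1(x)=M(x)-1\ge 0$ is locally bounded and the first inequality of~\eqref{eq.non-degen} holds.

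I expect the main obstacle to be bookkeeping rather than a genuine difficulty: the content of the lemma is precisely that the two separate requirements of Assumption~\ref{ass.non-degen} — a growth bound and a strict-obtuseness bound — can be packaged into, and recovered from, the single inequality~\eqref{eq.non-degen1}. The delicate points are (i) checking that $V'(x)$ and $\bar F(x)$ never vanish for $x\neq0$, so the divisions and the cosine representation are legitimate, (ii) verifying $M\ge1$ so that $M_1=M-1$ is a genuine nonnegative bound, and (iii) confirming that the regularity classes transfer correctly — local boundedness of $M$ becoming local strict positivity of $M_2=1/M$, and conversely — together with the trivial but necessary treatment of the equilibrium $x=0$.
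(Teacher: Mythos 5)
Your proof is correct and follows essentially the same route as the paper's: both directions hinge on the identity $|V'(x)\bar F(x)|=|\cos\theta(x)|\,|V'(x)|\,|\bar F(x)|$ together with the sign information from~\eqref{eq.inf-u-exp}, and your converse recovers $M_2=1/M$ just as the paper does (your $M_1=M-1$ versus its $M_1=M$ is immaterial). The only notable differences are cosmetic: your forward-direction constant $M=(1+M_1)/M_2$ is actually the cleaner bookkeeping compared with the paper's stated $M=M_1+M_1/M_2$, and your explicit handling of $x=0$ and of the nonvanishing of $V'(x)$ and $\bar F(x)$ for $x\neq 0$ makes precise points the paper leaves implicit.
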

\begin{proof}
To prove the ``only if'' part, notice that $|V'(x)\bar F(x)|=|\cos\theta(x)||V'(x)|\,|\bar F(x)|\ge M_2(x)|V'(x)|\,|\bar F(x)|$. Therefore $|\bar F(x)|^2\le M_1(x)|V'(x)|\,|\bar F(x)|\le M_1(x)/M_2(x)|V'(x)\bar F(x)|$ and~\eqref{eq.non-degen1} holds for $M(x)=M_1(x)+M_1(x)/M_2(x)$. To prove ``if'' part, note that
\[
M(x)\cos\theta(x)=\frac{M(x)V'(x)\bar F(x)}{|V'(x)|\,|\bar F(x)|}\overset{\eqref{eq.non-degen1},\eqref{eq.inf-u-exp}}{\le} -1
\]
and $|\bar F(x)|^2\leq M(x)|V'(x)\bar F(x)|\le M(x)|V'(x)|\,|\bar F(x)|$. Hence $|\bar F(x)|\leq M(x)|V'(x)|$
and~\eqref{eq.non-degen} holds with $M_1=M$ and $M_2=1/M$.
\end{proof}

Assumption~\ref{ass.non-degen} restricts the solution to approach the equilibrium ``smoothly'' in the sense that the state $x(t)$ cannot change much faster than the CLF decreases along it.
Note that the definition of ES-CLF~\eqref{eq.inf-u-exp} implies the following
``relaxed'' version of this assumption. First, $V'(x)=0$ implies that $\gamma V(x)\le V'(x)\bar F(x)=0$, that is, $x=0$ and thus $\bar F(x)=0$, in other words, $|\bar F(x)|/|V'(x)|<\infty$ at any $x\ne 0$. Second, the angle between $V'(x)$ and $\bar F(x)$ has to be obtuse $\cos\theta(x)<0$ unless $x=0$.
It is convenient to verify  Assumption~\ref{ass.non-degen} and the condition~\eqref{eq.inf-u-exp} simultaneously since both of these conditions involve $V'(x)$ and $\bar F(x)=F(x,u(x))$.

We now formulate a key technical lemma which allows to establish the criterion of dwell time positivity in Algorithm~\eqref{eq.alg1}.
This lemma, proved in Appendix, entails that the time $t_{n+1}-t_n$ elapsed between two consecutive events cannot be smaller than $\tau_{\sigma}(x(t_n))$, where $\tau_{\sigma}(\cdot)$ is
a locally uniformly positive function (depending only on $\rho(\cdot),\nu(\cdot),M(\cdot)$ and $\sigma$).
Consider again the solution $\xi(t)=\xi(t|x_*,\u(x_*))$ to the system~\eqref{eq.cauchy} with $\xi_0=x_*$ and $u_*=\u(x_*)$.
\begin{lemma}
\label{lem.key-lemma}
Let the system~\eqref{eq.syst} and the ES-CLF $V(x)$ satisfy Assumptions~\ref{ass.F}-\ref{ass.non-degen}. Then for any $\sigma\in(0,1)$ there exists a function $\tau_{\sigma}:\r^n\to (0,\infty)$, featured by the following properties:
\begin{enumerate}
\item $\tau_{\sigma}(\cdot)$ is uniformly strictly positive on any compact set;
\item for any $x_*\ne 0$ the function $\xi(t)=\xi(t|x_*,\u(x_*))$ is well-defined on $[0,\tau_{\sigma}(x_*)]$ and, furthermore,
\be\label{eq.w-ineq-main}
W(\xi(t),\u(x_*))<-\sigma\gamma V(\xi(t))\quad\forall t\in [0,\tau_{\sigma}(x_*)).
\ee
\end{enumerate}
If the functions $\varrho(x_*)$, $\nu(x_*)$ and $M(x_*)$) are \emph{globally} bounded, then $\tau_{\sigma}(x_*)$ is \emph{uniformly} strictly positive on $\r^n$.
\end{lemma}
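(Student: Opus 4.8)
The plan is to fix $x_*\neq 0$, abbreviate $u_*=\u(x_*)$, $V_*'=V'(x_*)$, $\bar F_*=\bar F(x_*)=F(x_*,u_*)$ and $W_*=W(x_*,u_*)=V_*'\bar F_*$, and to monitor the scalar gap $\Phi(t)\dfb W(\xi(t),u_*)+\sigma\gamma V(\xi(t))$ along the frozen-input flow $\xi(t)=\xi(t\mid x_*,u_*)$. The triggering event~\eqref{eq.inf-u-sigma-eq-n} is exactly $\Phi(t)=0$, so the task is to lower-bound the first time $\Phi$ reaches $0$. The initial margin is the crux: since $x_*\neq 0$ we have $V(x_*)>0$ and, by~\eqref{eq.inf-u-exp}, $|W_*|\ge\gamma V(x_*)>0$ (in particular $V_*'\neq 0$ and $\bar F_*\neq 0$, so the angle $\theta(x_*)$ is defined). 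Hence $\Phi(0)=W_*+\sigma\gamma V(x_*)=-|W_*|+\sigma\gamma V(x_*)\le-(1-\sigma)|W_*|$; the key point is that the margin is proportional to $|W_*|$, not merely to $\gamma V(x_*)$.

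Next I would bound the growth of $\Phi$ on the interval where~\eqref{eq.inf-u-sigma-eq-n} has not yet fired. On that interval $\dot V(\xi)=W(\xi,u_*)\le-\sigma\gamma V(\xi)\le 0$, so $V$ is non-increasing and $\xi(t)\in B(x_*)$, where the Lipschitz constants $\varrho(x_*)$ of $F(\cdot,u_*)$ and $\nu(x_*)$ of $V'$ are available. A Gr\"onwall estimate for $d(t)\dfb|\xi(t)-x_*|$ starting from $\dot d\le\varrho(x_*)d+|\bar F_*|$ gives $d(t)\le\frac{|\bar F_*|}{\varrho(x_*)}(e^{\varrho(x_*) t}-1)$ and $|F(\xi(t),u_*)|\le|\bar F_*|e^{\varrho(x_*) t}$ (the case $\varrho(x_*)=0$ being handled by continuity). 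Splitting $W(\xi,u_*)-W_*=[V'(\xi)-V_*']F(\xi,u_*)+V_*'[F(\xi,u_*)-\bar F_*]$ and inserting the two Lipschitz bounds yields
\[
W(\xi(t),u_*)-W_*\le\frac{\nu(x_*)|\bar F_*|^2}{\varrho(x_*)}(e^{\varrho(x_*) t}-1)e^{\varrho(x_*) t}+|V_*'|\,|\bar F_*|\,(e^{\varrho(x_*) t}-1).
\]

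Here the non-degeneracy Assumption~\ref{ass.non-degen}, in the equivalent form of Lemma~\ref{lem.tech}, does the decisive work: $|V_*'|\,|\bar F_*|+|\bar F_*|^2\le M(x_*)|V_*'\bar F_*|=M(x_*)|W_*|$, so both $|V_*'|\,|\bar F_*|$ and $|\bar F_*|^2$ are at most $M(x_*)|W_*|$. Substituting, the increment is at most $|W_*|\,g_{x_*}(t)$ with $g_{x_*}(t)\dfb M(x_*)\big[\frac{\nu(x_*)}{\varrho(x_*)}(e^{\varrho(x_*) t}-1)e^{\varrho(x_*) t}+(e^{\varrho(x_*) t}-1)\big]$, and the factor $|W_*|$ cancels against the margin: combining with $\Phi(0)\le-(1-\sigma)|W_*|$ and $V(\xi)\le V(x_*)$ gives $\Phi(t)\le|W_*|\,[g_{x_*}(t)-(1-\sigma)]$. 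Since $g_{x_*}$ is continuous, strictly increasing, $g_{x_*}(0)=0$ and $g_{x_*}(t)\to\infty$, I would define $\tau_\sigma(x_*)$ as the unique root of $g_{x_*}(\tau)=1-\sigma$; then $\Phi(t)<0$ for every $t<\tau_\sigma(x_*)$ at which the flow exists and has not yet triggered. A continuation/bootstrap argument (the solution cannot blow up while $V$ is non-increasing, by Proposition~\ref{prop.tech}, and $\Phi$ must vanish at the first trigger time by continuity) upgrades this to: $\xi$ exists on $[0,\tau_\sigma(x_*)]$ and~\eqref{eq.w-ineq-main} holds on $[0,\tau_\sigma(x_*))$, which is property~(2).

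For property~(1), on a compact $\mathcal K$ the quantities $\varrho,\nu,M$ are bounded above (by Assumptions~\ref{ass.F},\ref{ass.gradient} and Lemma~\ref{lem.tech}); restricting to $t\le 1$ and using $e^{s}-1\le s e^{s}$, I would bound $g_{x_*}(t)\le Ct$ with $C$ depending only on these uniform bounds, whence $\tau_\sigma(x_*)\ge\min\{1,(1-\sigma)/C\}>0$ for all $x_*\in\mathcal K$; the same estimate with global bounds on $\varrho,\nu,M$ gives uniform positivity on $\r^n$. I expect the main obstacle to be conceptual rather than computational: recognizing that the margin must be measured against $|W_*|$ and that Lemma~\ref{lem.tech} is exactly what cancels the $x_*$-dependent scale $|W_*|$, so that $\tau_\sigma$ stays bounded away from zero even as $x_*\to 0$, where $|V_*'|,|\bar F_*|$ and $V(x_*)$ all vanish. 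Minor technical care is needed for the degenerate case $\varrho(x_*)=0$, for the almost-everywhere differentiability of $d(t)$, and for making the bootstrap rigorous.
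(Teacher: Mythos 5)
Your proposal is correct and follows essentially the same route as the paper's Appendix proof: a Gr\"onwall bound on $|\xi(t)-x_*|$ in terms of $|F(x_*,u_*)|$ and $\varrho(x_*)$, the two-term splitting of $W(\xi(t),u_*)-W(x_*,u_*)$ via the Lipschitz constants $\nu$ and $\varrho$, and the decisive use of Lemma~\ref{lem.tech} to bound $|V'(x_*)|\,|\bar F(x_*)|+|\bar F(x_*)|^2$ by $M(x_*)|W(x_*,u_*)|$ so that the increment cancels against the margin $(1-\sigma)|W(x_*,u_*)|$. The only cosmetic differences are that the paper runs Gr\"onwall on $|\xi(t)-x_*|^2/2$ (sidestepping the a.e.-differentiability caveat you flag) and extracts an explicit $\sqrt{t}$-type formula~\eqref{eq.tau} for $\tau_\sigma$, whereas you define it implicitly as the root of $g_{x_*}(\tau)=1-\sigma$ and then linearize for uniformity.
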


The proof of Lemma~\ref{lem.key-lemma} will be given in Appendix.
Note that Algorithm~\eqref{eq.alg1} \emph{does not} employ the functions $\varrho(x_*)$, $\mu(x_*)$ and $M(x_*)$ in any way; they influence only the dwell time estimate $\tau_{\sigma}(\cdot)$.
The explicit formula for $\tau_{\sigma}(x_*)$, given in Appendix, shows that $\tau_{\sigma}$ is non-increasing in $\sigma$, being maximal for $\sigma=0$
and vanishing as $\sigma\to 1$. Recalling that $\sigma$ regulates the convergence speed of the algorithm, one can notice
that the price paid for the fast convergence is the small dwell time between the consecutive events (or, equivalently,
large number of events per unit of time). Our main result is the following criterion of dwell time positivity.
\begin{theorem}\label{thm.dwell}
Let the assumptions of Lemma~\ref{lem.key-lemma} hold. Then Algorithm~\eqref{eq.alg1} provides locally uniformly positive dwell time
\be\label{eq.tau-min}
\mathfrak{T}(x_0)\ge \tau_{\sigma,min}(x_0)\dfb\inf_{x\in B(x_0)}\tau_{\sigma}(x).
\ee
Here $\tau_{\sigma}(x)$ stands for the function from Lemma~\ref{lem.key-lemma}.
\end{theorem}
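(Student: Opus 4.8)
The plan is to reduce the inter-event time at each step to the function $\tau_{\sigma}$ supplied by Lemma~\ref{lem.key-lemma}, and then to propagate this lower bound uniformly in $n$ using the monotonicity of $V$ along the closed-loop trajectory. First I would fix an initial condition $x_0$ and an arbitrary sampling index $n$. If $V(x(t_n))=0$, then $x(t_n)=0$ by positive definiteness~\eqref{eq.pos-def}, Algorithm~\eqref{eq.alg1} sets $t_{n+1}=\infty$, and the bound $t_{n+1}-t_n\ge\tau_{\sigma}(x(t_n))$ is vacuous; so I may assume $V(x(t_n))>0$, whence $x(t_n)\ne 0$ and Lemma~\ref{lem.key-lemma}(2) is applicable with $x_*=x(t_n)$.

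Next I would identify the inter-sampling arc with the autonomous flow. By Corollary~\ref{cor.unique1}, on $[t_n,t_{n+1})$ the closed-loop solution coincides with $x(t)=\xi(t-t_n|x(t_n),u_n)$, where $u_n=\u(x(t_n))$. Lemma~\ref{lem.key-lemma} then guarantees that this flow is well-defined (in particular, does not blow up) on $[0,\tau_{\sigma}(x(t_n))]$ and satisfies the strict inequality $W(\xi(s),\u(x(t_n)))<-\sigma\gamma V(\xi(s))$ for every $s\in[0,\tau_{\sigma}(x(t_n)))$. Shifting the time argument by $t_n$, this reads $W(x(t),u_n)<-\sigma\gamma V(x(t))$ for all $t\in[t_n,t_n+\tau_{\sigma}(x(t_n)))$, so the triggering equality~\eqref{eq.inf-u-sigma-eq-n} cannot hold on that interval. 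Since $t_{n+1}$ is by definition the first instant after $t_n$ at which~\eqref{eq.inf-u-sigma-eq-n} is met, I conclude $t_{n+1}-t_n\ge\tau_{\sigma}(x(t_n))$.

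To make the estimate uniform over $n$, I would invoke Remark~\ref{eq.non-increase}: the CLF is non-increasing along the closed-loop solution, so $V(x(t_n))\le V(x(t_0))=V(x_0)$, i.e. $x(t_n)\in B(x_0)$ for every $n$. Hence $\tau_{\sigma}(x(t_n))\ge\inf_{x\in B(x_0)}\tau_{\sigma}(x)=\tau_{\sigma,min}(x_0)$, and taking the infimum over $n$ yields $\mathfrak{T}(x_0)\ge\tau_{\sigma,min}(x_0)$. Strict positivity of this bound follows because $B(x_0)$ is the sublevel set $\{x:V(x)\le V(x_0)\}$ of the proper continuous function $V$, hence compact, and $\tau_{\sigma}$ is uniformly strictly positive on compacts by Lemma~\ref{lem.key-lemma}(1).

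Finally, for the \emph{locally uniform} positivity over a compact $\mathcal K$, I would note that $B(\mathcal K)=\{x:V(x)\le\max_{x_*\in\mathcal K}V(x_*)\}$ is again a sublevel set and thus compact, so that $\inf_{x_0\in\mathcal K}\mathfrak{T}(x_0)\ge\inf_{x\in B(\mathcal K)}\tau_{\sigma}(x)>0$. I expect the only delicate point to be matching the half-open interval of strict inequality in Lemma~\ref{lem.key-lemma} against the definition of $t_{n+1}$ as an infimum: strictness on the \emph{open} interval together with mere well-definedness up to the closed endpoint is exactly what rules out triggering before $t_n+\tau_{\sigma}(x(t_n))$, and equality at the endpoint still respects the bound. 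All the genuine analytic difficulty — constructing $\tau_{\sigma}$ and verifying~\eqref{eq.w-ineq-main} from Assumptions~\ref{ass.F}--\ref{ass.non-degen} — has already been absorbed into Lemma~\ref{lem.key-lemma}, so the theorem itself is an assembly argument.
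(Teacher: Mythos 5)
Your proof is correct and follows essentially the same route as the paper: apply Lemma~\ref{lem.key-lemma} at each sampling instant $x_*=x(t_n)$ to get $t_{n+1}-t_n\ge\tau_{\sigma}(x(t_n))$, use the monotonicity of $V$ (Remark~\ref{eq.non-increase}) to keep $x(t_n)\in B(x_0)$ and hence bound below by $\tau_{\sigma,min}(x_0)$, and use compactness of $B(\mathcal K)$ for the locally uniform positivity. Your treatment of the $V(x(t_n))=0$ case and of the half-open interval at the endpoint is slightly more explicit than the paper's, but the argument is the same.
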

\begin{proof}
Notice first that the function $\tau_{\sigma,min}$ from~\eqref{eq.tau-min} is uniformly strictly positive on any compact set $K\subseteq\r^n$ since
\[
\inf_{x_0\in\mathcal K}\tau_{\sigma,min}(x_0)=\inf_{x\in B(\mathcal K)}\tau_{\sigma}(x)>0,
\]
$B(K)$ is bounded and thus $\tau_{\sigma}$ is strictly positive on $B(K)$.
Remark~\eqref{eq.non-increase} implies that each set $B(x_*)$ is forward invariant, in particular, the solution starting at $x(0)=x_0$ remains
in $B(x_0)$. If an event occurs at $t=t_n$, then Lemma~\ref{lem.key-lemma} applied to $x_*=x(t_n)$ entails that the next event cannot occur earlier than at
$t=t_n+\tau_{\sigma}(x(t_n))\ge t_n+\tau_{\sigma,min}(x_0)$, that is, $t_{n+1}-t_n\ge\tau_{\sigma,min}(x_0)$ for any $n$.
\end{proof}

Proposition~\ref{prop.tech} and Theorem~\ref{thm.dwell} imply, in particular, that algorithm~\eqref{eq.alg1} is feasible in
the sense that for any $x(0)$, the closed-loop system has the unique solution, which is forward complete.

\subsection{Extensions}

We now consider two important extensions of the main result, dealing with non-exponential stability and safety-critical systems.

\subsubsection{Non-exponential convergence}
Our algorithm~\eqref{eq.alg1} can be easily modified to cope with many CLFs that do not provide exponential convergence. For instance, replacing~\eqref{eq.inf-u-exp} by the inequality
\be\label{eq.inf-u-p}
V'(x)F(x,\u(x))\le -\gamma V(x)^p\;\forall x,\quad F(0,\u(0))=0,
\ee
with $p>1$ and $\gamma>0$, one has $V(x(t))=O\left(t^{\frac{1}{1-p}}\right)$ as $t\to\infty$ since
\[
V(x(t))\le \left(V(x(0))+\gamma(p-1)t\right)^{\frac{1}{1-p}}.
\]
The arbitrarily close convergence rate is provided by the modification of algorithm~\eqref{eq.alg1}, where~\eqref{eq.inf-u-sigma-eq-n} is replaced by
\[
W(x(t),u_n)=-\sigma\gamma V(x(t))^p.
\]
Instead of~\eqref{eq.inf-u-exp-sigma}, such an algorithm provides the condition
\ben
V'(x(t))F(x(t),u(t))\le -\sigma\gamma V(x(t))^p,
\een
giving an explicit estimate of the convergence rate
\be\label{eq.conv-rate-p}
V(x(t))\le \left(V(x(0))+\sigma\gamma(p-1)t\right)^{\frac{1}{1-p}}.
\ee
A closer analysis of the proofs reveals that all statements from Subsect.~3.2, including the dwell time positivity criterion from Theorem~\ref{thm.dwell}, retain their validity for such a modified algorithm.

\subsubsection{Safety-critical control}
For many safety-critical systems, such as e.g. autonomous robots, smart factories and power grids, safety has to be
provided by the control design. Often the requirement of safety can be mathematically described as avoiding of some ``dangerous'' set $\mathcal D$ by the solution $x(t)\not\in\mathcal D$. As has been demonstrated in~\cite{RomdlonyJayawardhana:2016},
in many situation the stabilization problem with this additional restriction can be solved by using \emph{control Lyapunov-Barrier functions} (CLBF). We do not consider here the general definition of CLBF from~\cite{RomdlonyJayawardhana:2016} and only formulate a simple result, concerned with safe stabilization.
As usual, $Int \mathcal D$ denotes the interior of the set $\mathcal D$, and $\partial D$ stands for its boundary.
\begin{lemma}
Let $\mathcal D\subset\r^n\setminus\{0\}$ stand for the \emph{closed} set of ``dangerous'' states, we assume that
$\mathcal D=\overline{Int\mathcal D}$. Suppose that a CLF\,$V(x)$ also serves as a barrier certificate in the sense that for any $\xi\in\partial\mathcal D$ one has $V(\xi)\ge v_*>0$. Then for any $x(0)\not\in \mathcal D$ such that $V(x(0))<v_*$, the event-triggered algorithm~\eqref{eq.alg1} provides safety $x(t)\not\in\mathcal D$.
\end{lemma}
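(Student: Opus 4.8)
The plan is to combine the monotonicity of $V$ along closed-loop trajectories with a standard first-crossing (barrier) argument. First I would record the two ingredients already available. By Remark~\eqref{eq.non-increase} (equivalently, by~\eqref{eq.inf-u-exp-sigma} together with $V\ge 0$), the algorithm enforces $\dot V(x(t))\le -\sigma\gamma V(x(t))\le 0$, so $V$ is non-increasing along the closed-loop solution; in particular $V(x(t))\le V(x(0))<v_*$ for every $t$ in the domain of the solution. By Proposition~\ref{prop.tech} together with Corollaries~\ref{cor.unique1}--\ref{cor.t-n-exist} and Theorem~\ref{thm.dwell}, the closed-loop solution $x(\cdot)$ is unique, forward complete, and continuous (it is a concatenation of Carath\'eodory solutions with no state jumps at the sampling instants), so $x:[0,\infty)\to\r^n$ is a genuine continuous path.

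Next I would argue by contradiction. Suppose $x(t_1)\in\mathcal D$ for some $t_1>0$ and set $t^*\dfb\inf\{t\ge 0: x(t)\in\mathcal D\}$. Since $\mathcal D$ is closed and $x(\cdot)$ is continuous, the set $\{t: x(t)\in\mathcal D\}$ is closed, so the infimum is attained: $x(t^*)\in\mathcal D$. Because $x(0)\notin\mathcal D$ and $\mathcal D$ is closed (hence $\r^n\setminus\mathcal D$ is an open neighbourhood of $x(0)$), we have $t^*>0$, and by definition of the infimum $x(t)\notin\mathcal D$ for all $t\in[0,t^*)$. Letting $t\uparrow t^*$ and using continuity, $x(t^*)$ is a limit of points of $\r^n\setminus\mathcal D$, hence $x(t^*)\in\overline{\r^n\setminus\mathcal D}$; combined with $x(t^*)\in\mathcal D$ this gives $x(t^*)\in\mathcal D\cap\overline{\r^n\setminus\mathcal D}=\partial\mathcal D$.

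Finally I would close the contradiction. The barrier-certificate hypothesis yields $V(x(t^*))\ge v_*$, whereas monotonicity gives $V(x(t^*))\le V(x(0))<v_*$, which is impossible. Hence no such $t_1$ exists, i.e. $x(t)\notin\mathcal D$ for all $t\ge 0$, which is the asserted safety. The regularity assumption $\mathcal D=\overline{Int\,\mathcal D}$ plays the role of guaranteeing that the dangerous region is a genuine ``body'' whose topological boundary $\partial\mathcal D$ is exactly the interface controlled by the barrier inequality, so that the first crossing cannot reach $\mathcal D$ through a degenerate exceptional set. I do not expect a deep obstacle here: the only point requiring real care is the justification that the first time the path meets the closed set $\mathcal D$ it must do so on $\partial\mathcal D$, which rests on the continuity of the Carath\'eodory solution through the sampling instants and on $\mathcal D$ being closed; the monotonicity of $V$, forward completeness, and the strict initial gap $V(x(0))<v_*$ then do the rest.
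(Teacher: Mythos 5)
Your argument is correct and is essentially the paper's own proof, only spelled out in more detail: the paper likewise observes that~\eqref{eq.inf-u-exp-sigma} forces $V(x(t))\le V(x(0))<v_*$ along the closed-loop solution, so the trajectory can never reach $\partial\mathcal D$, where $V\ge v_*$. Your explicit first-crossing argument merely fills in the topological step that the paper leaves implicit.
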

\begin{proof}
Indeed, the design of the algorithm provides that~\eqref{eq.inf-u-exp-sigma} holds along any solution, in particular,
$V(x(t))<v_*$. Therefore, the solution cannot cross the boundary of the set $\mathcal D$.
\end{proof}

In particular, if the assumptions of Theorem~\ref{thm.dwell} are valid, the algorithm~\eqref{eq.alg1} provides exponential event-triggered stabilization with guaranteed safety whenever $x(0)\not\in \mathcal D$ and $V(x(0))<v_*$.
Obviously, the exponential convergence~\eqref{eq.inf-u-exp} can be replaced by~\eqref{eq.inf-u-p}.

\section{Examples}

We illustrate algorithm~\eqref{eq.alg1} by considering two examples.

\subsection{Example~1. Event-triggered backstepping}

Event-triggered control proves to be an important tool in design of cooperative control algorithms for automated driving,
where communication between the vehicles is seriously restricted by the wireless network bandwidth~\cite{DolkBorgersHeemels:2017,DolkPloegHeemels:2017}.
In this subsection, we consider a simplified problem of two vehicle platoons merging~\cite{Chien:1995}. Assume that the lead platoon (Fig.~\ref{fig.2}) travels at constant speed $v_0>0$, an algorithm is wanted allowing the follower (trail)
platoon to merge safely with it. Denoting the velocity of the trail platoon's leader by $v(t)$ and its
distance to the lead platoon (Fig.~\ref{fig.2}) by $d(t)$, the merging goal can be formulated as follows~\cite{Chien:1995}
\be\label{eq.merge}
d(t)-d_0\xrightarrow[t\to\infty]{} 0,\quad v(t)-v_0\xrightarrow[t\to\infty]{} 0,
\ee
where $d_0$ is the desired safe inter-vehicle distance. In general, more complicated
speed control policies are required~\cite{AlvarezHorotwitz:1999}, ensuring safety in the case where the lead platoon applies emergency braking. Such merging algorithms are beyond the scope of this paper.
\begin{figure}[h]
\center
\includegraphics[width=\columnwidth]{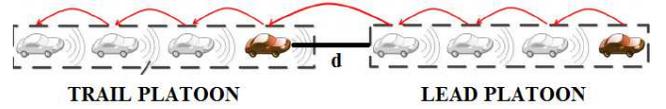}
\caption{Two platoons merging}\label{fig.2}
\end{figure}

Our goal is to design the algorithm for the leading vehicle of the trail platoon, providing the control goal~\eqref{eq.merge}. Unlike~\cite{Chien:1995}, dealing with highly nonlinear controllers for the throttle and braking systems of the vehicle, we suppose that the vehicle's longitudinal dynamics can be approximated~\cite{StotskyChien:1995} by the equation
\be\label{eq.vehi}
\tau(v)\dot a(t)+a(t)=u(t).
\ee
Here $a(t)=\dot v(t)$ is the leading vehicle's actual acceleration, whereas $u(t)$ can be treated as the desired acceleration command.
Note that, in general, the system~\eqref{eq.vehi} is \emph{nonlinear} due to the presence of function $\tau(v)$, depending on the dynamics of the servo-loop and characterizing time lag
between the commanded and actual accelerations. We suppose that $\tau(v)$ is known, the trail platoon's leader measures $d(t),v(t),a(t)$ and is aware of the lead platoon's speed $v_0$.

To design an ES-CLF for this stabilization problem, we use the well-known backstepping procedure~\cite{KrsticKokotovicBook,Khalil}. Choosing a parameter $k>1$, we introduce the new state variables $x_1,x_2,x_3$ as follows
\ben
\begin{gathered}
x_1(t)\dfb d(t)-d_0,\quad
x_2(t)\dfb \dot x_1(t)+kx_1(t)=(v_0-v(t))+kx_1(t)\\
x_3(t)\dfb \dot x_2(t)+kx_2(t)= -a(t)+2k(v_0-v(t))+k^2x_1(t).
\end{gathered}
\een

By noticing that $v_0-v(t)=x_2-kx_1$ and $a(t)=2kx_2(t)-k^2x_1(t)-x_3(t)$, the equations~\eqref{eq.vehi} are rewritten as follows
\be\label{eq.vehi1}
\begin{aligned}
\dot x_1&=x_2-kx_1\\
\dot x_2&=x_3-kx_2\\
\dot x_3&=k^2[x_2-kx_1]+\\
&+[\tau(v)^{-1}-2k](2kx_2-k^2x_1-x_3)-\tau(v)^{-1}u\\
v&=v_0-(x_2-kx_1).
\end{aligned}
\ee
The backstepping procedure implies that $V(x)=\frac{1}{2}(x_1^2+x_2^2+x_3^2)$ is the ES-CLF for the system~\eqref{eq.vehi1}, associated with the controller
\[
\begin{gathered}
\u(x)\dfb\tau(v)k^2[x_2-kx_1]+[1-2k\tau(v)](2kx_2-k^2x_1-x_3)-x_1+kx_3.
\end{gathered}
\]
A straightforward computation shows that
\[
\begin{aligned}
F(x,\u(x))&=(x_2-kx_1,x_3-kx_2,x_1-kx_3)^{\top},\\
V'(x)F(x,\u(x))&=-2(k-1)V(x)-\\&-\frac{1}{2}[(x_1-x_2)^2+(x_1-x_3)^2+(x_2-x_3)^2],
\end{aligned}
\]
entailing~\eqref{eq.inf-u-exp} with $\gamma=2(k-1)$. It can be easily shown that all assumptions of Theorem~\ref{thm.dwell} (in particular,~\eqref{eq.non-degen1}) hold. The algorithm~\eqref{eq.alg1} is an event-triggered controller for platoons' merging.

In Fig.~\ref{Fig.ex1}, we simulate the behavior of the algorithm~\eqref{eq.alg1} with $\sigma=0.9$, choosing $k=1.005$ and $\tau=0.5s$. The initial condition corresponds to the situation where $x_1(0)=d(t)-d_0=10$, $x_2(0)=kx_1(0)$, $x_3(0)=k^2x_1(0)$. In other words, at the initial time the trail platoon has the same speed as the lead platoon $v(0)=v_0$, $a(0)=0$, whereas the distance to the lead platoon exceeds the desired reference value by 10m.
One may notice that the maneuver of the trail platoon's leader includes a short period of ``harsh'' braking, which causes discomfort for human occupants of the vehicle. Vehicle platooning under realistic safety and comfort constraints is a non-trivial problem, which is beyond the scope of this paper.
\begin{figure}[h]
\begin{subfigure}[t]{0.75\columnwidth}
\center
\includegraphics[width=\columnwidth]{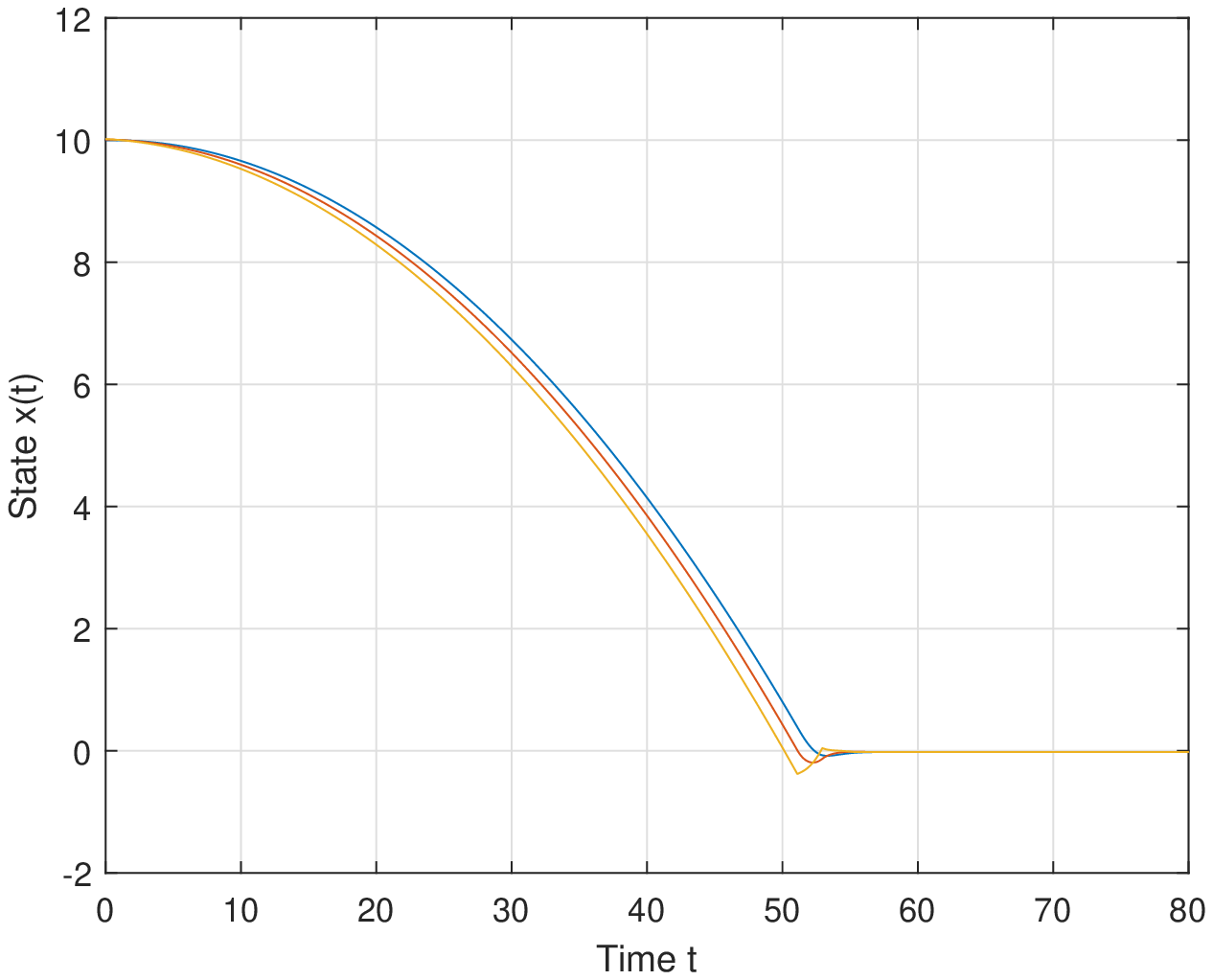}
\end{subfigure}
\begin{subfigure}[t]{0.75\columnwidth}
\center
\includegraphics[width=\columnwidth]{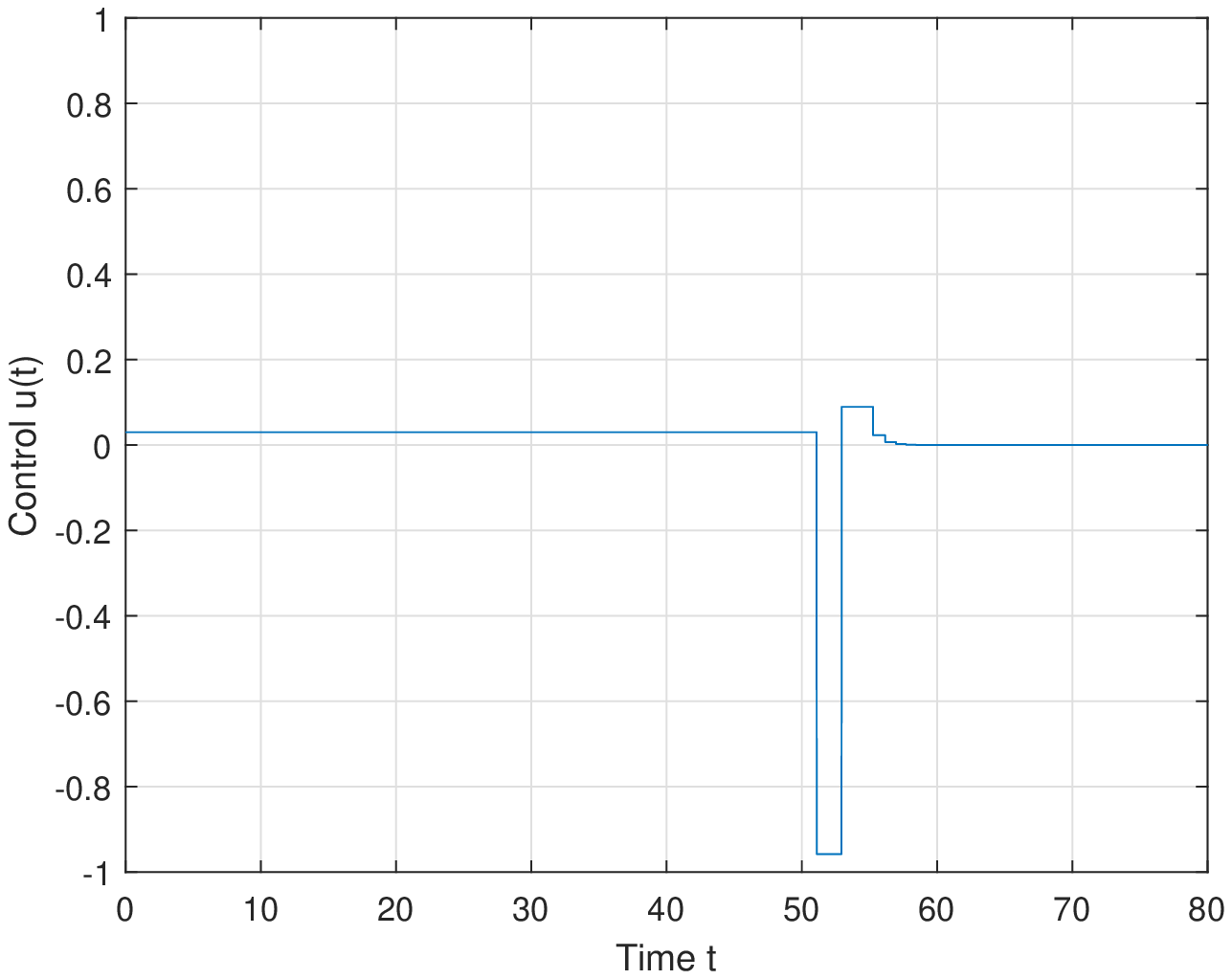}
\end{subfigure}
\begin{subfigure}[t]{0.75\columnwidth}
\center
\includegraphics[width=\columnwidth]{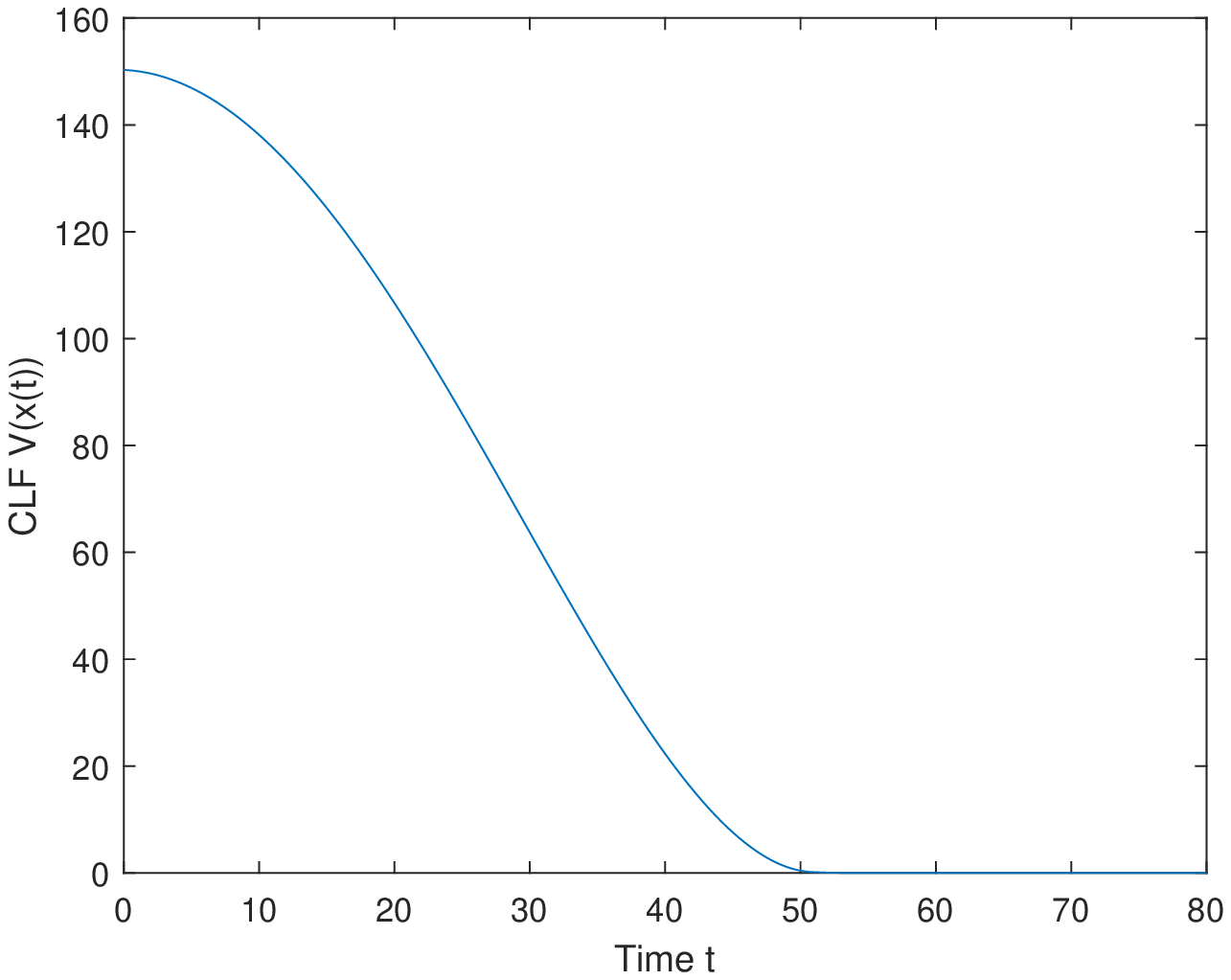}
\end{subfigure}
\caption{Event-triggered stabilization of system~\eqref{eq.vehi1}}\label{Fig.ex1}
\end{figure}

\subsection{Example~2. Non-exponential stabilization}

Our second example is borrowed from~\cite{AntaTabuada:2008} and deals with a two-dimensional homogeneous system
\be\label{eq.syst-anta}
\begin{gathered}
\dot x_1=-x_1^3+x_1x_2^2,\\
\dot x_2=x_1x_2^2+u-x_1^2x_2
\end{gathered}
\ee

The quadratic form $V(x)=\frac{1}{2}[x_1^2+x_2^2]$ is not an ES-CLF, however, it satisfies~\eqref{eq.inf-u-p}
with $p=2$, where $\u(x)=-x_2^3-x_1x_2^2$ since
\[
V'(x)F(x,\u(x))=-x_1^4-x_2^4\le -V^2/2.
\]
According to~\eqref{eq.conv-rate-p}, the event-triggered algorithm~\eqref{eq.alg1} provides the stabilization with the convergence rate
\be\label{eq.conv-rate-exam}
V(x(t))\le \left[V(x(0))+\sigma t/2\right]^{-1}.
\ee

To compare our algorithm with the one reported in~\cite{Marchand:2013}, we simulate the behavior of the system for $x_1(0)=0.1, x_2(0)=0.4$, choosing $\sigma=0.9$. The results of numerical simulation (Fig.~\ref{Fig.ex2}) are very similar to those presented in~\cite{Marchand:2013}. Although the convergence of the solution is slow ($V(x(t))=O(t^{-1})$, and hence $|x(t)|=O(t^{-1/2})$), its second component (and thus also the control input) converges very fast. During the first $200$s,
only two events are detected at times $t_0=0$ and $t_1\approx 5.26$, after the second event the control input is $u(t)\approx -6\cdot 10^{-7}$.
Unlike the controller from~\cite{Marchand:2013}, based on the Sontag formula~\cite{Sontag:1989}, our algorithm provides the explicit estimate of the solution's convergence rate~\eqref{eq.conv-rate-p}.
\begin{figure}[h]
\begin{subfigure}[t]{0.75\columnwidth}
\center
\includegraphics[width=\columnwidth]{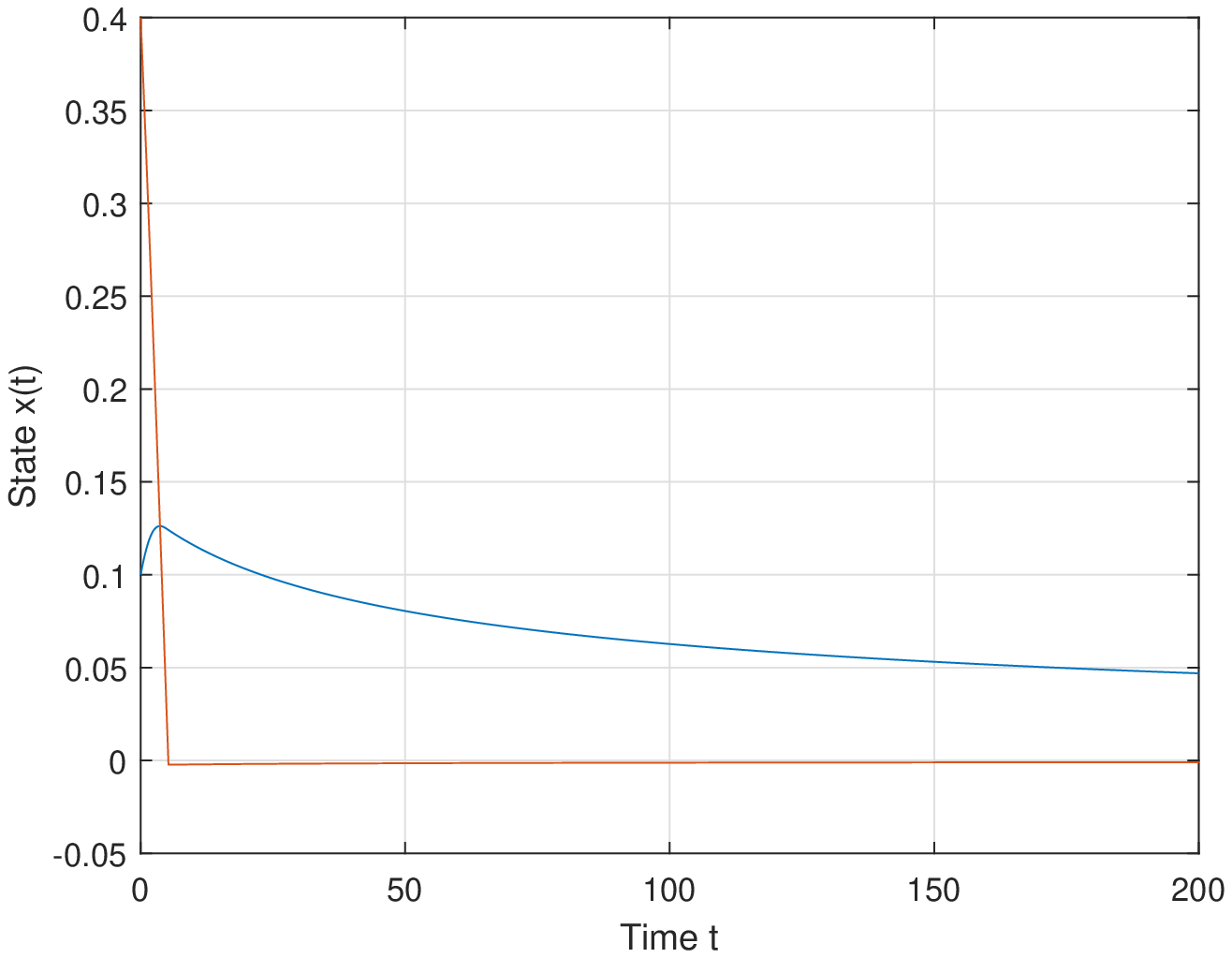}
\end{subfigure}
\begin{subfigure}[t]{0.75\columnwidth}
\center
\includegraphics[width=\columnwidth]{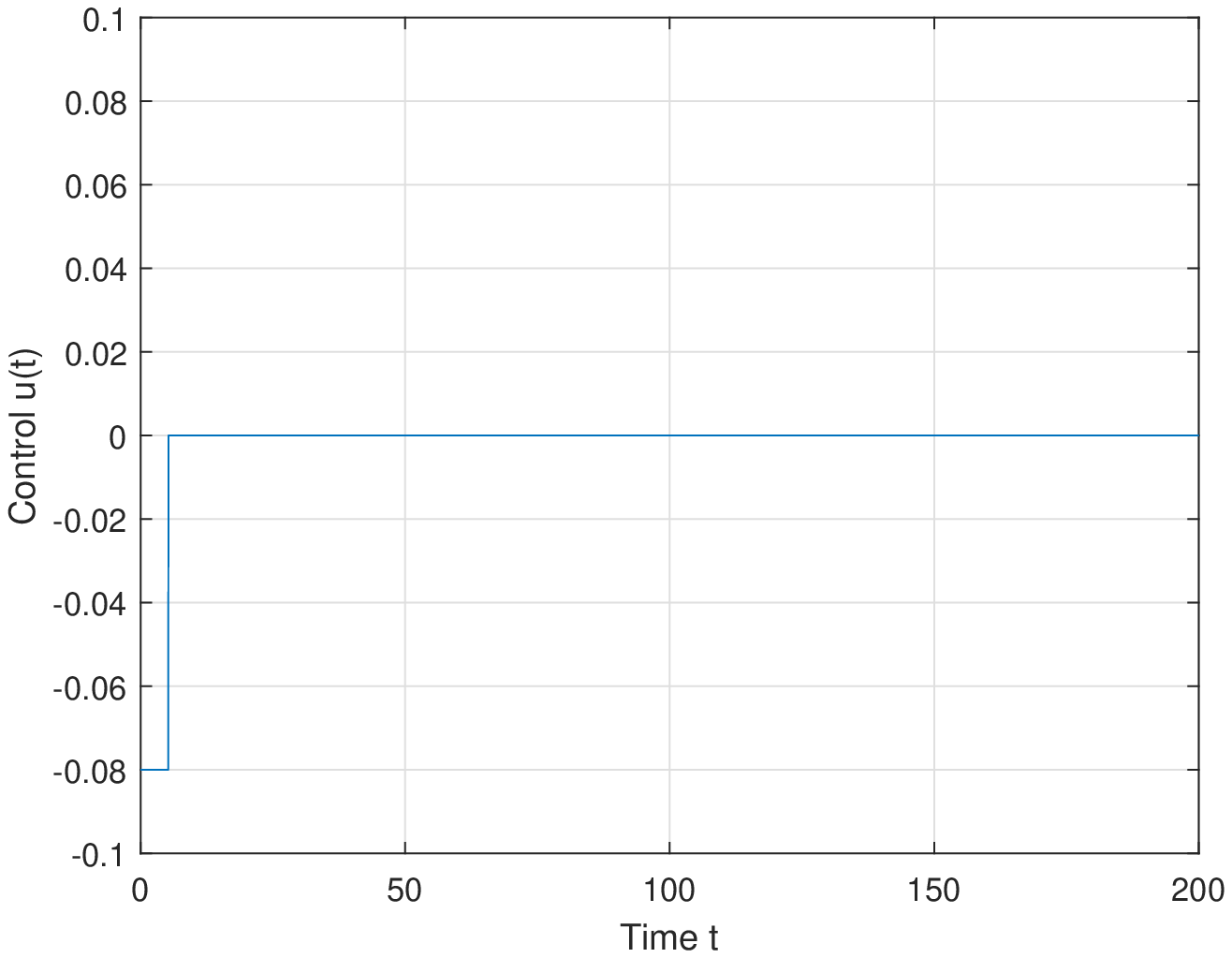}
\end{subfigure}
\begin{subfigure}[t]{0.75\columnwidth}
\center
\includegraphics[width=\columnwidth]{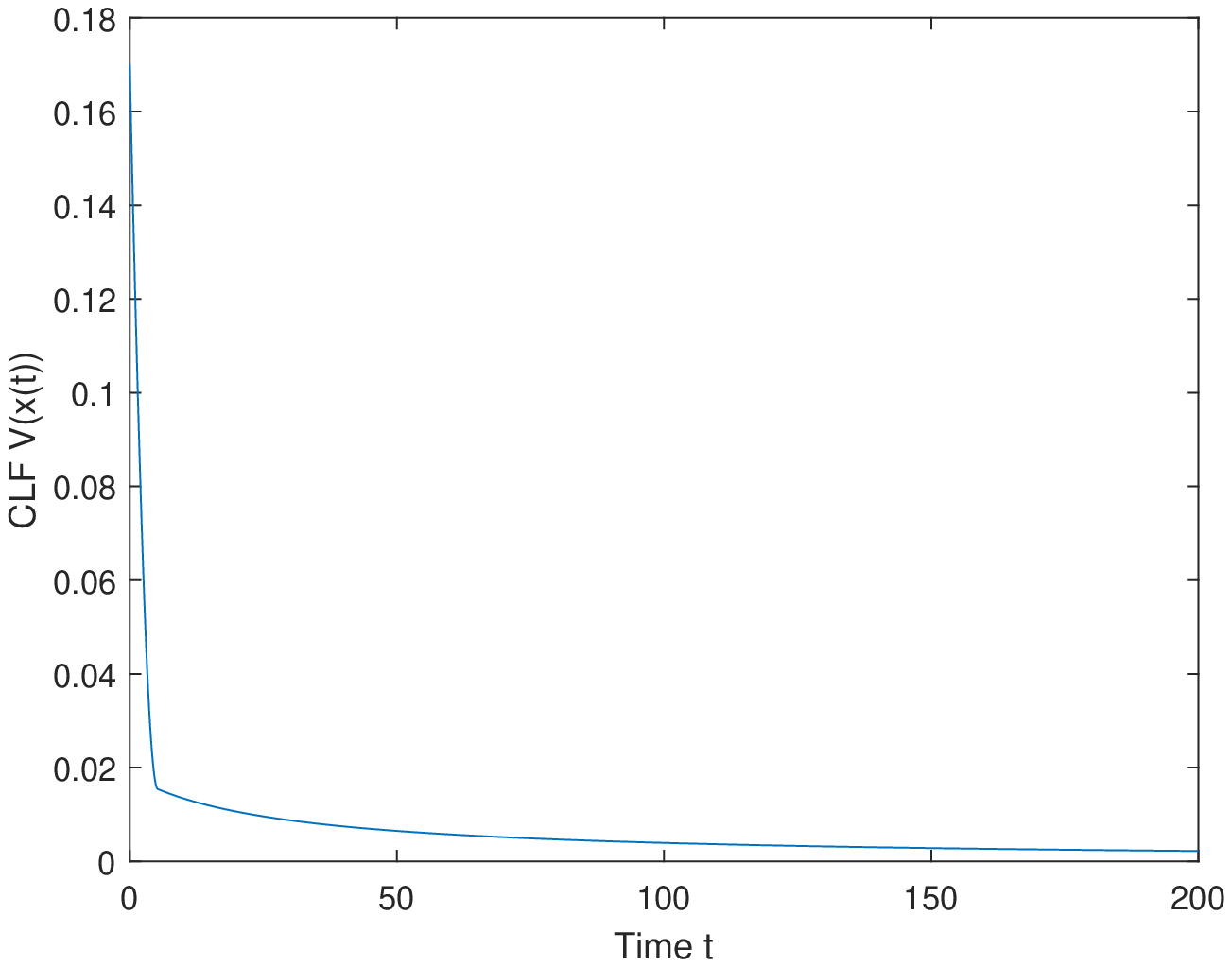}
\end{subfigure}
\caption{Event-triggered stabilization of system~\eqref{eq.syst-anta}}\label{Fig.ex2}
\end{figure}

\section{Conclusion}

In this paper, we address the following fundamental question: let a nonlinear system admit a control Lyapunov function (CLF), corresponding to a continuous-time stabilizing controller. Does this imply the existence of an event-triggered controller, also providing exponential convergence? Under certain natural assumptions, we give an affirmative answer and show that such a controller in fact also provides the positive dwell time between consecutive events, and the convergence rate of the closed-loop system can be arbitrarily close to the continuous time system's rate. The results remain valid for non-exponentially stabilizing CLFs, that provide polynomial convergence rate. Two examples are considered,
illustrating application of the proposed method to nonlinear systems.

Although the existence of CLFs often can be derived from the inverse Lyapunov theorems, to find a CLF satisfying Assumptions~\ref{ass.F}-\ref{ass.non-degen}
is a non-trivial problem; computational approaches to cope with it are subject of ongoing research. 
Finally, it should be noted that the CLF method is not the only approach to event-triggered control of nonlinear system, e.g. in~\cite{LiuLiuDou:2014} an \emph{impulsive} event-triggered controller,
exponentially stabilizing a nonlinear system, has been proposed. Unlike our controller, this controller leads to \emph{discontinuous} trajectories $x(t)$ (at each sampling instant, the continuous dynamics is stopped and restarted in another point) and imposes some other limitations, e.g. the system has to be fully actuated with globally Lipschitz right-hand side.

\bibliographystyle{ACM-Reference-Format}
\bibliography{event,nonlinear}

\appendix
\section{Proof of Lemma~3.12}\label{sec.proof}

We start with several technical propositions, establishing some useful properties of the solution to a general Cauchy problem~\eqref{eq.cauchy}. Throughout this subsection, we assume that Assumptions~\ref{ass.F}-\ref{ass.non-degen} hold.
Henceforth we always assume that in~\eqref{eq.cauchy} $u_*=\u(x_*)$ and $\xi_0\in B(x_*)$.
Denoting the (unique) solution to~\eqref{eq.cauchy} by $\xi(t|\xi_0,u_*)$, let $t_*(\xi_0,u_*)>0$ stand for the first instant $t$ when $W(\xi(t),u_*)=-\sigma\gamma V(\xi(t))$ and $\Delta_*(\xi_0,u_*)=[0,t_*(\xi_0,u_*)]$. If such an instant does not exist, we put $t_*(\xi_0,u_*)=\infty$ and $\Delta_*=[0,\infty)$.
The solution $\xi(t)=\xi(t|\xi_0,u_*)$ is well defined on $\Delta_*(\xi_0,u_*)$ thanks to Proposition~\ref{prop.tech}, and $\xi(t)\in B(\xi_0)$ since $\dot V(\xi(t))=W(\xi(t),u_0)\le 0$ on $\Delta_*(\xi_0,u_*)$.

\begin{proposition}
For any $\xi_0\in B(x_*)$, $u_*=\u(x_*)$ and $t\in\Delta_*(\xi_0,u_*)$, the vector $\xi(t)=\xi(t|\xi_0,u_*)$ satisfies the inequalities:
\be\label{eq.xi-prop}
\begin{gathered}
|\xi(t)-\xi_0|\le c(t,x_*)|F(\xi_0,u_*)|,\\ |F(\xi(t),u_*)|\le (1+c(t,x_*))|F(\xi_0,u_*)|,\\
c(t,x_*)\dfb \left(\frac{e^{(2\varrho(x_*)+1)t}-1}{2\varrho(x_*)+1}\right)^{1/2}.
\end{gathered}
\ee
Here $\varrho(x_*)$ is the Lipschitz constant from~\eqref{eq.kappa}.
\end{proposition}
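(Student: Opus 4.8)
The plan is to control the displacement $\xi(t)-\xi_0$ by an energy (Grönwall-type) estimate, exploiting the fact that the entire segment of the trajectory stays inside $B(x_*)$, where $F(\cdot,u_*)$ is globally $\varrho(x_*)$-Lipschitz by~\eqref{eq.kappa}. By the discussion preceding the statement, $\dot V(\xi(t))=W(\xi(t),u_*)\le 0$ on $\Delta_*(\xi_0,u_*)$, whence $\xi(t)\in B(\xi_0)\subseteq B(x_*)$ throughout. This confinement is what licenses the use of the single constant $\varrho(x_*)$ uniformly in $t$, and I regard it as the key structural point rather than the computation itself.

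First I would write $e(t)\dfb\xi(t)-\xi_0$, so that $e(0)=0$ and $\dot e(t)=F(\xi(t),u_*)$. Splitting $F(\xi(t),u_*)=F(\xi_0,u_*)+\big[F(\xi(t),u_*)-F(\xi_0,u_*)\big]$ and differentiating $|e|^2$, I estimate
\[
\frac{d}{dt}|e(t)|^2 = 2\langle e(t),F(\xi_0,u_*)\rangle + 2\langle e(t),F(\xi(t),u_*)-F(\xi_0,u_*)\rangle.
\]
The first term is bounded by Cauchy--Schwarz and Young's inequality (with unit weight) by $|e(t)|^2+|F(\xi_0,u_*)|^2$, and the second by $2\varrho(x_*)|e(t)|^2$ using the Lipschitz bound together with $\xi(t)\in B(x_*)$. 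This yields the scalar differential inequality
\[
\frac{d}{dt}|e(t)|^2 \le (2\varrho(x_*)+1)\,|e(t)|^2 + |F(\xi_0,u_*)|^2,\qquad |e(0)|^2=0.
\]
The unit weight chosen in Young's inequality is precisely what produces the constant $2\varrho(x_*)+1$ that appears under the square root in $c(t,x_*)$.

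Applying the comparison lemma (equivalently, integrating the linear ODE $\dot y=(2\varrho(x_*)+1)y+|F(\xi_0,u_*)|^2$ with $y(0)=0$), I obtain $|e(t)|^2\le |F(\xi_0,u_*)|^2\,(e^{(2\varrho(x_*)+1)t}-1)/(2\varrho(x_*)+1)$; taking square roots gives the first inequality in~\eqref{eq.xi-prop} with $c(t,x_*)$ as defined. The second inequality then follows at once from the first: by the triangle inequality and $\varrho(x_*)$-Lipschitzness on $B(x_*)$,
\[
|F(\xi(t),u_*)| \le |F(\xi_0,u_*)| + \varrho(x_*)\,|\xi(t)-\xi_0| \le \big(1+\varrho(x_*)\,c(t,x_*)\big)\,|F(\xi_0,u_*)|,
\]
which is of the stated form. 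The only genuine obstacle is the confinement argument at the outset; once the trajectory is trapped in $B(x_*)$ on $\Delta_*$, the remainder is a routine Grönwall/comparison estimate, with the sole subtlety being the tuning of the Young constant to reproduce $2\varrho(x_*)+1$ exactly.
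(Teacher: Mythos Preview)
Your argument is correct and essentially identical to the paper's own proof: both set up the quadratic $|\xi(t)-\xi_0|^2$, split $F(\xi(t),u_*)$ into $F(\xi_0,u_*)$ plus a Lipschitz increment, apply Young's inequality with unit weight to obtain the linear differential inequality with rate $2\varrho(x_*)+1$, and invoke the comparison lemma. One small remark: your final display gives $(1+\varrho(x_*)c(t,x_*))|F(\xi_0,u_*)|$, which is not literally the $(1+c(t,x_*))|F(\xi_0,u_*)|$ claimed in the statement; the paper's own proof arrives at the same bound you do, so this appears to be a typo in the stated inequality rather than a defect in your argument.
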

\begin{proof}
Let $\alpha(t)\dfb |\xi(t)-\xi_0|^2/2$. By noticing that $\dot\alpha(t)=(\xi(t)-x_*)^{\top}F(\xi(t),u_*)$, one arrives at the inequality
\[
\begin{aligned}
\dot\alpha&(t)=(\xi(t)-\xi_0)^{\top}[F(\xi(t),u_*)-F(\xi_0,u_*)]+\\
&+(\xi(t)-\xi_0)^{\top}F(\xi_0,u_*)\le 2\varrho(x_*)\alpha(t)+\alpha(t)+\frac{|F(\xi_0,u_*)|^2}{2}
\end{aligned}
\]
(by assumption, that $\xi_0\in B(x_*)$, and hence $\xi(t)\in B(x_*)$ for any $t\in\Delta_*(\xi_0,u_*)$).
The usual comparison lemma implies that $\alpha(t)\le\beta(t)$, where $\beta(t)$ is the solution to the Cauchy problem
\[
\dot\beta(t)=[2\varrho(x_*)+1]\beta(t)+\frac{|F(\xi_0,u_*)|^2}{2},\quad \beta(0)=\alpha(0)=0.
\]
Obviously, $\beta(t)=c(t,x_*)^2|F(\xi_0,u_*)|^2/2$, which entails the the first inequality in~\eqref{eq.xi-prop}.
The second inequality is immediate from~\eqref{eq.kappa} since $|F(\xi(t),u_*)|\le |F(\xi_0,u_*)|+\varrho(x_*)|\xi(t)-\xi_0|$.
\end{proof}

To simplify the estimates for the minimal dwell time, we will use the following simple inequality for the function $c(t,x_*)$.
\begin{proposition}
If $0\le t\le (1+2\varrho(x_*))^{-1}$, then
\be\label{eq.c-est}
c(t,x_*)\le \sqrt{te}\le \sqrt{e}.
\ee
\end{proposition}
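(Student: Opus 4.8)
The plan is to reduce the claimed bound to a single elementary inequality for the exponential function. Write $a\dfb 2\varrho(x_*)+1$; since $\varrho(x_*)\ge 0$ we have $a\ge 1$, and the definition of $c$ in \eqref{eq.xi-prop} reads $c(t,x_*)^2=(e^{at}-1)/a$. The hypothesis $t\le(1+2\varrho(x_*))^{-1}=1/a$ means precisely that the substitution $s\dfb at$ ranges over $[0,1]$. Under this substitution the target inequality $c(t,x_*)^2\le te$ becomes $(e^{s}-1)/a\le (s/a)e$, that is, $e^{s}\le 1+es$, so the whole statement rests on establishing this last inequality for $s\in[0,1]$.

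First I would dispatch $e^{s}\le 1+es$ on $[0,1]$ by convexity: since $s\mapsto e^{s}$ is convex, its graph lies below the chord joining the points $(0,1)$ and $(1,e)$, which gives $e^{s}\le 1+(e-1)s$; as $s\ge 0$ and $e-1<e$, this immediately yields $e^{s}\le 1+es$. Equivalently, one may argue by monotonicity: the function $h(s)\dfb 1+es-e^{s}$ satisfies $h(0)=0$ and $h'(s)=e-e^{s}\ge 0$ for $s\in[0,1]$, so $h$ is non-decreasing and hence $h\ge 0$ on this interval.

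With the inequality in hand I would undo the substitution. From $e^{at}-1\le e\,at$ we obtain $c(t,x_*)^2=(e^{at}-1)/a\le et$, hence $c(t,x_*)\le\sqrt{te}$, which is the first bound in \eqref{eq.c-est}. The second bound follows because $t\le 1/a\le 1$ (again using $a\ge 1$), so $te\le e$ and therefore $\sqrt{te}\le\sqrt{e}$.

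I expect no genuine obstacle here, as the statement is an elementary estimate; the only points to get right are the reduction step itself—keeping careful track of the factor $a$ and verifying that the hypothesis on $t$ is exactly what forces $s=at\in[0,1]$—and the observation that the inequality $e^{s}\le 1+es$ is used only on the interval $[0,1]$ on which it is valid. In particular, the bound fails for large $s$, so the restriction $t\le(1+2\varrho(x_*))^{-1}$ cannot be dropped.
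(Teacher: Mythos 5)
Your proof is correct and follows essentially the same route as the paper: both reduce the claim to bounding $(e^{at}-1)/(at)$ by $e$ for $at\in[0,1]$, the paper doing so via the mean value theorem ($e^{at}-1=ta\,e^{at_0}$ with $t_0\in(0,t)$) and you via convexity of the exponential (giving the marginally sharper $e^{s}\le 1+(e-1)s$). The reduction, the use of $a=2\varrho(x_*)+1\ge 1$, and the derivation of the second inequality from $t\le 1/a\le 1$ all match the paper's argument.
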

\begin{proof}
Denoting for brevity $\varrho=\varrho(x_*)$, the statement follows from the mean value theorem, applied to the function $e^{(2\varrho+1)t}$. Since $e^{(2\varrho+1)t}-1=t(2\varrho+1)e^{(2\varrho+1)t_0}$,
$t_0\in (0,t)$,
\ben
c(t,x_*)^2=\frac{e^{(2\varrho+1)t}-1}{2\varrho+1}=te^{(2\varrho+1)t_0}\le te^{(2\varrho+1)t}\le te\le e,
\een
which implies the inequalities~\eqref{eq.c-est}.
\end{proof}
\begin{corollary}
Let $\xi_0\in B(x_*)$, $u_*=\u(x_*)$ and $\xi(t)=\xi(t|\xi_0,u_*)$, where $t\in\Delta_*(\xi_0,u_*)\cap \left[0,(1+2\varrho(x_*))^{-1}\right]$. Then
\be\label{eq.delta_w1}
\begin{aligned}
|W(\xi(t),u_*)&-W(\xi_0,u_*)|\le \\&\le\sqrt{t}\mu(x_*)\left(|V'(\xi_0)|\,|F(\xi_0,u_*)|+|F(\xi_0,u_*)|^2\right),\\
\mu(x_*)&\dfb e^{1/2}\max\left\{\varrho(x_*),(1+\sqrt{e})\nu(x_*)\right\}.
\end{aligned}
\ee
Here $\nu(x_*)$ is the Lipschitz constant from~\eqref{eq.nu}.
\end{corollary}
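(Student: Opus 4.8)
The plan is to exploit the product structure $W=V'F$ and split the difference $W(\xi(t),u_*)-W(\xi_0,u_*)$ into a term measuring the variation of the gradient $V'$ and a term measuring the variation of the vector field $F$, bounding each separately by means of the Lipschitz constants $\nu(x_*)$ and $\varrho(x_*)$ together with the a priori estimates~\eqref{eq.xi-prop} already established for $|\xi(t)-\xi_0|$ and $|F(\xi(t),u_*)|$.

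First I would use the telescoping identity
\[
W(\xi(t),u_*)-W(\xi_0,u_*)=[V'(\xi(t))-V'(\xi_0)]F(\xi(t),u_*)+V'(\xi_0)[F(\xi(t),u_*)-F(\xi_0,u_*)],
\]
obtained by adding and subtracting $V'(\xi_0)F(\xi(t),u_*)$, and then apply the triangle inequality to bound $|W(\xi(t),u_*)-W(\xi_0,u_*)|$ by $|V'(\xi(t))-V'(\xi_0)|\,|F(\xi(t),u_*)|+|V'(\xi_0)|\,|F(\xi(t),u_*)-F(\xi_0,u_*)|$.

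The essential preliminary point is that both $\xi_0$ and $\xi(t)$ belong to $B(x_*)$, so that the Lipschitz estimates~\eqref{eq.kappa} and~\eqref{eq.nu} apply on this set: indeed $\xi_0\in B(x_*)$ by hypothesis, whence $B(\xi_0)\subseteq B(x_*)$ because $V(\xi_0)\le V(x_*)$, and $\xi(t)\in B(\xi_0)$ since $V$ is non-increasing along $\xi$ on $\Delta_*(\xi_0,u_*)$. Consequently $|V'(\xi(t))-V'(\xi_0)|\le\nu(x_*)|\xi(t)-\xi_0|$ and $|F(\xi(t),u_*)-F(\xi_0,u_*)|\le\varrho(x_*)|\xi(t)-\xi_0|$. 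Substituting the two bounds of~\eqref{eq.xi-prop}, the first term becomes at most $\nu(x_*)c(t,x_*)(1+c(t,x_*))|F(\xi_0,u_*)|^2$ and the second at most $\varrho(x_*)c(t,x_*)|V'(\xi_0)|\,|F(\xi_0,u_*)|$.

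The final step is to absorb the factors of $c(t,x_*)$ using the hypothesis $t\le(1+2\varrho(x_*))^{-1}$. By~\eqref{eq.c-est} one has $c(t,x_*)\le\sqrt{t}\sqrt{e}$ and $1+c(t,x_*)\le 1+\sqrt{e}$, so the two coefficients are dominated by $\sqrt{t}\,\sqrt{e}(1+\sqrt{e})\nu(x_*)$ and $\sqrt{t}\,\sqrt{e}\,\varrho(x_*)$ respectively; since each of these is at most $\sqrt{t}\,\mu(x_*)$ with $\mu(x_*)=e^{1/2}\max\{\varrho(x_*),(1+\sqrt{e})\nu(x_*)\}$, collecting the two contributions yields~\eqref{eq.delta_w1}. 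I do not anticipate any real obstacle: the argument is a bilinear bookkeeping estimate, and the only delicate point is verifying that both endpoints stay inside $B(x_*)$, so that the two Lipschitz constants are legitimately applicable.
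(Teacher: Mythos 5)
Your proposal is correct and follows essentially the same route as the paper's proof: the same telescoping decomposition of $W=V'F$, the same Lipschitz bounds via $\nu(x_*)$ and $\varrho(x_*)$ on $B(x_*)$, the same substitution of the estimates~\eqref{eq.xi-prop}, and the same absorption of $c(t,x_*)$ via~\eqref{eq.c-est} into the constant $\mu(x_*)$. No gaps.
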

\begin{proof}
Recalling that $\xi=\xi(t)\in B(\xi_*)$, one has
\[
\begin{aligned}
|W&(\xi(t),u_*)-W(\xi_0,u_*)|\leq |\left(V'(\xi(t))-V'(\xi_0)\right)F(\xi(t),u_*)|+\\
+&|V'(\xi_0)\left(F(\xi(t),u_*)-F(\xi_0,u_*)\right)|\overset{\eqref{eq.kappa},\eqref{eq.nu}}{\leq}\\
\leq & \nu(x_*)|\xi(t)-\xi_0| |F(\xi(t),u_*)|+\varrho(x_*)|V'(\xi_0)| |\xi(t)-\xi_0|\overset{\eqref{eq.xi-prop}}{\leq}\\
\leq & \nu(x_*)c(t,x_*)(1+c(t,x_*))|F(\xi_0,u_*)|^2+\\
+ & \varrho(x_*)c(t,x_*)|V'(\xi_0)|\,|F(\xi_0,u_*)|\overset{\eqref{eq.c-est}}{\leq}\\
\leq &\sqrt{te}\left(\varrho(x_*)|V'(\xi_0)|\,|F(\xi_0,u_*)|+\nu(x_*)(1+\sqrt{e})|F(\xi_0,u_*)|^2\right).
\end{aligned}
\]
The inequality~\eqref{eq.delta_w1} now follows from the definition of $\mu(x_*)$.
\end{proof}

With some abuse of notation, let $t_*(x_*)\dfb t_*(x_*,u_*)$ and $\Delta(x_*)\dfb \Delta(x_*,u_*)$.
Substituting $\xi_0=x_*$ into the inequality~\eqref{eq.delta_w1}, one obtains the following proposition.
\begin{proposition}\label{prop.w}
For an arbitrary $\sigma\in (0,1)$ and $x_*\ne 0$ let
\be\label{eq.tau}
\tilde\tau_{\sigma}(x_*)=\min\left\{\frac{(1-\sigma)^2}{\mu(x_*)^2M(x_*)^2},\frac{1}{1+2\varrho(x_*)}\right\}>0.
\ee
Then $t_*(x_*)\ge\tilde\tau_{\sigma}(x_*)$ and for any $t\in [0,\tilde\tau_{\sigma}(x_*))$ the solution $\xi(t)=\xi(t|x_*,u_*)$ (where $u_*=\u(x_*)$) satisfies the following inequalities
\be\label{eq.w-ineq}
W(\xi(t),u_*)<\sigma W(x_*,u_*)<-\sigma\gamma V(\xi(t)).
\ee
\end{proposition}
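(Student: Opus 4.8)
The plan is to establish the two claims in sequence, first the lower bound $t_*(x_*)\ge\tilde\tau_\sigma(x_*)$ on the time of first triggering, and then the sandwich inequality~\eqref{eq.w-ineq}, which turns out to be essentially equivalent to the first claim. I would begin by specializing Corollary~3.13 to $\xi_0=x_*$, so that $|V'(\xi_0)|=|V'(x_*)|$ and $|F(\xi_0,u_*)|=|\bar F(x_*)|$ (recall $u_*=\u(x_*)$, so $F(x_*,u_*)=\bar F(x_*)$). This gives, for $t$ in the admissible range $[0,(1+2\varrho(x_*))^{-1}]$,
\[
|W(\xi(t),u_*)-W(x_*,u_*)|\le\sqrt{t}\,\mu(x_*)\bigl(|V'(x_*)|\,|\bar F(x_*)|+|\bar F(x_*)|^2\bigr).
\]
The crucial move is to invoke Assumption~\ref{ass.non-degen} in the form provided by Lemma~\ref{lem.tech}, namely inequality~\eqref{eq.non-degen1}, which bounds the parenthesized factor by $M(x_*)\,|V'(x_*)\bar F(x_*)|=M(x_*)\,|W(x_*,u_*)|$. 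Since $W(x_*,u_*)=V'(x_*)\bar F(x_*)<0$ by~\eqref{eq.inf-u-exp} for $x_*\ne0$, we have $|W(x_*,u_*)|=-W(x_*,u_*)$, so the right-hand side becomes $\sqrt{t}\,\mu(x_*)M(x_*)\,(-W(x_*,u_*))$.

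Next I would bound the deviation by the target gap. The goal is to keep $W(\xi(t),u_*)$ below $\sigma W(x_*,u_*)$, which requires
\[
W(\xi(t),u_*)-W(x_*,u_*)<(\sigma-1)W(x_*,u_*)=(1-\sigma)\,(-W(x_*,u_*)).
\]
Using the absolute-value bound above, it suffices to require $\sqrt{t}\,\mu(x_*)M(x_*)<1-\sigma$, i.e. $t<(1-\sigma)^2/(\mu(x_*)^2M(x_*)^2)$. Combining this with the range restriction $t\le(1+2\varrho(x_*))^{-1}$ under which Corollary~3.13 is valid yields exactly the threshold $\tilde\tau_\sigma(x_*)$ defined in~\eqref{eq.tau}. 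Thus for every $t\in[0,\tilde\tau_\sigma(x_*))$ the left inequality $W(\xi(t),u_*)<\sigma W(x_*,u_*)$ holds.

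For the right inequality I would argue that $W(\xi(t),u_*)<\sigma W(x_*,u_*)$ already implies the triggering condition~\eqref{eq.inf-u-sigma-eq-n} has not been reached, hence $t<t_*(x_*)$, which establishes $t_*(x_*)\ge\tilde\tau_\sigma(x_*)$ and in turn guarantees (via Proposition~\ref{prop.tech}) that $\xi(t)$ is well-defined and stays in $B(x_*)$ throughout. It remains to pass from $\sigma W(x_*,u_*)$ to $-\sigma\gamma V(\xi(t))$. Here I would chain the estimates: by~\eqref{eq.inf-u-exp}, $W(x_*,u_*)\le-\gamma V(x_*)$, so $\sigma W(x_*,u_*)\le-\sigma\gamma V(x_*)$; and since $V$ is non-increasing along $\xi$ on $\Delta_*$ (because $\dot V=W\le0$ there, giving $V(\xi(t))\le V(x_*)$), we get $-\sigma\gamma V(x_*)\le-\sigma\gamma V(\xi(t))$, completing the sandwich. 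The main obstacle I anticipate is bookkeeping the direction of inequalities carefully around the sign of $W(x_*,u_*)$, and ensuring the monotonicity of $V$ is legitimately available on the whole interval $[0,\tilde\tau_\sigma(x_*))$ — this requires confirming that we remain inside $\Delta_*(x_*,u_*)$, which is precisely what the lower bound $t_*(x_*)\ge\tilde\tau_\sigma(x_*)$ secures, so the two parts of the proof must be interleaved rather than proven independently.
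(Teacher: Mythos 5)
Your proposal is correct and follows essentially the same route as the paper's proof: specialize the perturbation bound \eqref{eq.delta_w1} to $\xi_0=x_*$, absorb the factor $|V'(x_*)|\,|\bar F(x_*)|+|\bar F(x_*)|^2$ via \eqref{eq.non-degen1} into $M(x_*)|W(x_*,u_*)|$, require $\sqrt{t}\,\mu(x_*)M(x_*)<1-\sigma$ to get $W(\xi(t),u_*)<\sigma W(x_*,u_*)$, and then chain with \eqref{eq.inf-u-exp} and the monotonicity of $V$ along $\xi$. Your remark that the bound $t_*(x_*)\ge\tilde\tau_\sigma(x_*)$ and the inequality \eqref{eq.w-ineq} must be established together (since the estimates are only valid on $\Delta_*$) matches exactly how the paper closes the argument.
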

\begin{proof}
For any $t\in\Delta_*(x_*)\cap [0,(1+2\varrho(x_*))^{-1})$, one has
\be\label{eq.aux2}
\begin{aligned}
|W(\xi(t),u_*)-W(x_*,u_*)|\overset{\eqref{eq.non-degen1},\eqref{eq.delta_w1}}{\le}\sqrt{t}\mu(x_*)M(x_*)|V'(x_*)\bar F(x_*)|\\
\overset{\eqref{eq.w}}{=}
\sqrt{t}\mu(x_*)M(x_*)|W(x_*,u_*)|.
\end{aligned}
\ee
For any $t<(1-\sigma)^2/(\mu(x_*)M(x_*))^2$ one has $\sqrt{t}\mu(x_*)M(x_*)<1-\sigma$ due to~\eqref{eq.tau}.
Inequality~\eqref{eq.aux2} and definition~\eqref{eq.tau} entail that
\be\label{eq.aux3}
\begin{gathered}
\dot V(\xi(t))=W(\xi(t),u_*)<W(x_*,u_*)+(1-\sigma)|W(x_*,u_*)|=\\
=|W(x_*,u_*)|(-1+1-\sigma)=\sigma W(x_*,u_*)\overset{\eqref{eq.inf-u-exp}}{\le} -\sigma\gamma V(x_*)<0.
\end{gathered}
\ee
whenever $t<\min(t_*,\tau_{\sigma}(x_*))$. By noticing that $V(\xi(t))<V(x_*)$ and thus
$-\sigma\gamma V(\xi(t))>-\sigma\gamma V(x_*)$, one shows that~\eqref{eq.w-ineq} holds for $t<\min(t_*,\tau_{\sigma}(x_*))$. By definition, we either have $W(\xi(t_*),u_*)=-\gamma V(\xi(t_*))$ or $t_*=\infty$; hence $t_*\geq\tilde\tau_{\sigma}(x_*)$, which ends the proof.
\end{proof}

\textbf{Proof of Lemma~\ref{lem.key-lemma}}

Let $\tau_{\sigma}(x)$ stand for the function~\eqref{eq.tau}. The inequality~\eqref{eq.w-ineq-main} follows from Proposition~\ref{prop.w}. Recalling that $\varrho(x),\nu(x),M(x)$ are locally bounded, which is also valid for $\mu(x)$ and $\rho(x)$,~\eqref{eq.tau} implies that $\tau_{\sigma}(x)$ is \emph{uniformly positive} on any compact set.
If the functions $\varrho(x),\nu(x),M(x),\rho(x)$ are \emph{globally bounded}, the same holds for $\mu(x)$ and $\tau_{\sigma}(x)$ is \emph{uniformly positive} over all $x\in\r^n$. $\square$

\end{document}